\newcommand{\blind}{1}
\newtheoremstyle{indented}{5pt}{5pt}{\addtolength{\leftskip}{2.5em}}{}{\bfseries}{.}{.7em}{}
\newtheorem{lemma}{Lemma} 
\newtheorem{definition}{Definition}
\newtheorem{remark}{Remark}
\newtheorem{theorem}{Theorem}
\newcommand*\dd{\mathop{}\!\mathrm{d}}
\DeclareMathOperator{\expit}{expit}
\DeclareMathOperator{\logit}{logit}
\DeclareMathOperator{\var}{Var}
\DeclareMathOperator{\tmlee}{tmle}
\DeclareMathOperator{\etmlee}{\textit{e}-tmle-al}
\DeclareMathOperator{\etmleee}{\textit{e}-tmle}
\DeclareMathOperator{\ctmlee}{c-tmle}
\DeclareMathOperator{\ggcomp}{g-comp}
\DeclareMathOperator{\eff}{eff}
\DeclareMathOperator{\aipww}{aipw}
\newcommand{\Vertex}[2]% pos, name
{\node[minimum width=0.6cm,inner sep=0.05cm] (#2) at (#1) {$\footnotesize#2$};
  % \node[circle,draw,minimum width=0.6cm,inner sep=0] (#2) at (#1) {};
  % \node[rounded corners=3pt,below,draw=black,fill=white,inner sep=1.5pt] at (#2.south) {\footnotesize#2};
}
\newcommand{\Vertexr}[2]% pos, name
{\node[rectangle, draw, minimum width=0.6cm,inner sep=0.05cm] (#2) at (#1) {$\footnotesize#2$};
  % \node[circle,draw,minimum width=0.6cm,inner sep=0] (#2) at (#1) {};
  % \node[rounded corners=3pt,below,draw=black,fill=white,inner sep=1.5pt] at (#2.south) {\footnotesize#2};
}
\newcommand{\EdgeR}[3]%
{ \begin{pgfonlayer}{background}
    \draw[dashed,#3] (#1) to[bend right=30] (#2);
  \end{pgfonlayer}
}
\newcommand{\EdgeL}[3]%
{ \begin{pgfonlayer}{background}
    \draw[dashed,#3] (#1) to[bend left=30] (#2);
  \end{pgfonlayer}
}
\newcommand{\Arrow}[3]%
{ \begin{pgfonlayer}{background}
    \draw[->,#3] (#1) -- (#2);
  \end{pgfonlayer}
}
\newcommand{\indep}{\mbox{$\perp\!\!\!\perp\,$}}
\newcommand{\Pn}{\mathbb{P}_{n}} 
\newcommand{\PP}{\mathbb{P}} 
\renewenvironment{proof}{{\it Proof }}{\qed \\}
\newcommand{\tmle}{\hat\theta_{\tmlee}}
\newcommand{\etmle}{\hat\theta_{\etmlee}}
\newcommand{\etmlen}{\hat\theta_{\etmleee}}
\newcommand{\seff}{\sigma^2_{\eff}}
\newcommand{\setmle}{\sigma_{\etmlee}}
\newcommand{\hsetmle}{\hat{\sigma}_{\etmlee}}
\newcommand{\ctmle}{\hat\theta_{\ctmlee}}
\newcommand{\gcomp}{\hat\theta_{\ggcomp}}
\newcommand{\aipw}{\hat\theta_{\aipww}}
  \theoremstyle{definition}
  \newtheorem{assumption}{}
  \theoremstyle{definition}
  \newtheorem{assumptioniden}{}
\title{Doubly robust estimators for the average treatment effect under
  positivity violations: introducing the $e$-score} \if1\blind
\author[1]{Iv\'an D\'iaz \thanks{corresponding author:
    ild2005@med.cornell.edu}}
\affil[1]{\small Division of Biostatistics, Weill Cornell Medicine.}
\author[1]{\vspace{-1.5cm} }
\begin{document}
\maketitle

\begin{abstract}
  Estimation of causal parameters from observational data requires
  complete confounder adjustment, as well as positivity of the
  propensity score for each treatment arm. There is often a trade-off
  between these two assumptions: confounding bias may be reduced
  through adjustment for a large number of pre-treatment covariates,
  but positivity is less likely in analyses with irrelevant predictors
  of treatment such as instrumental variables. Under empirical
  positivity violations, propensity score weights are highly variable,
  and doubly robust estimators suffer from high variance and large
  finite sample bias. To solve this problem, we introduce the
  $e$-score, which is defined through a dimension reduction for the
  propensity score% that effectively eliminates instrumental
  % variables
  . This dimension reduction is based on a result known as
  collaborative double robustness, which roughly states that a
  propensity score conditioning only on the bias of the outcome
  regression estimator is sufficient to attain double robustness. We
  propose methods to construct doubly robust estimators based on the
  $e$-score, and discuss their properties such as consistency,
  efficiency, and asymptotic distribution. This allows the
  construction of asymptotically valid Wald-type confidence intervals
  and hypothesis tests. We present an illustrative application on
  estimating the effect of smoking on bone mineral content in
  adolescent girls well as a synthetic data simulation illustrating
  the bias and variance reduction and asymptotic normality achieved by
  our proposed estimators.
\end{abstract}

% We show that doubly robust estimators constructed replacing the
% propensity score by the $e$-score are collaborative doubl
% robustness, which allows the construction of easy-to-implement
% estimators that only require estimation of univariate regressions,
% in addition to the outcome regression and propensity score. Thus, we
% overcome the need for complex model selection that often accompanies
% collaborative double robust estimators. Our novel collaborative
% double robustness results also makes it easy to generalize the
% estimators to complex data structures such as longitudinal
% studies.
% \doublespacing
% \section{Introduction}

% \subsection{Observed Data and Notation}
\section{Introduction}

Estimation of causal effects from observational studies requires two
assumptions on the data generating mechanism: the assumption of
\textit{no unmeasured confounding}, and the assumption of
\textit{positivity} of the treatment probabilities. Positivity states
that individuals in all strata of the confounders have a positive
probability of getting assigned to each treatment arm
\citep{RosenbaumRubin83}. % . Recent
% methods to attenuate unmeasured confounder bias include partial
% identification results which identify bounds on the causal effect of
% interest \cite[e.g.,][]{manski1990nonparametric, manski2003partial},
% as well as sensitivity analysis which evaluate the causal effect under
% a range of hypothesized levels of unmeasured confounding
% \cite[e.g.,][]{Scharfstein1999R, diaz2013sensitivity}.
Theoretical positivity violations, whereby the \textit{true} treatment
probabilities are zero for some covariate strata, are problematic
because they preclude identification of the causal effect from
observational data. Empirical positivity violations, whereby the
\textit{estimated} treatment probabilities are close to zero for some
confounder strata, are also problematic because non-parametric regular
estimators of the causal effect suffer from large variability and
increased finite sample bias. Large amounts of pre-treatment data
poses a trade-off between these two
assumptions. \cite{vanderweele2011new} show that adjustment for more
pre-treatment variables reduces confounding bias, provided that all
adjustment variables are causes of either the treatment or the
outcome. However, instrumental variables, which we define as variables
that are cause of the treatment but are otherwise unrelated to the
outcome, are known to inflate the non-parametric efficiency bound
\citep{hahn2004functional,
  brookhart2006variable,greenland2008invited,schisterman2009overadjustment,rotnitzky2010note,myers2011effects},
and may lead to positivity violations.

In this article we focus on a class of estimators called \emph{doubly
  robust}. Double robustness is a property that ensures consistency of
the causal effect estimator under consistency of at least one of two
nuisance parameters: the outcome expectation conditional on treatment
and covariates (henceforth referred to as outcome regression), or the
probability of each treatment arm conditional on covariates
(henceforth referred to as the propensity score). Several doubly
robust methods for joint selection of the propensity score and outcome
regression models have been recently proposed
\citep{belloni2014inference,shortreed2017outcome,cheng2017estimating,ertefaie2018variable,koch2018covariate}. Generally,
these methods solve the trade-off between unconfoundedness and
positivity by performing variable selection for both models using
carefully constructed penalization terms in generalized linear
models. In this paper, we use the term \textit{high-dimensional data}
to mean a large number of variables that does not vary with sample
size. Though these parametric models may be useful with a few
variables, parametric assumptions in high-dimensional settings are
rarely justified by scientific knowledge
\citep{starmans2018predicament}. This implies that the models are
frequently misspecified, yielding inconsistent effect estimators
\citep[the consequences of parametric model misspecification in causal
inference were demonstrated in an influential simulation study
by][]{Kang2007}. Data-adaptive estimation methods offer an opportunity
to employ flexible estimators that are more likely to achieve
consistency. Methods such as those based on regression trees,
regularization, boosting, neural networks, support vector machines,
adaptive splines, and stacked ensembles of them, offer flexibility in
the specification of interactions, non-linear, and higher-order terms,
a flexibility that is not available for parametric models. Because of
this, machine learning has gained increasing popularity among causal
inference researchers \citep[e.g.,][]{vanderLaan&Petersen&Joffe05,
  Wang&Bembom&vanderLaan06,ridgeway2007, Bembometal08a,
  lee2010improving,neugebauer2016case}. Indeed, the framework of
\textit{targeted learning} \citep{vanderLaan&Rubin06,
  vanderLaanRose11,vanderLaanRose18}, concerned with the development
of $n^{1/2}$-consistent, asymptotically normal, and efficient
estimators of smooth low-dimensional parameters through the use
state-of-the art machine learning, has arisen as an alternative to the
widespread use of misspecified parametric models. Though much progress
has been made in targeted learning, joint model selection techniques
for causal inference using data-adaptive nuisance estimators remains
an open problem. Our manuscript aims to develop methodology to fill
this gap in the literature.

Our work is inspired by a result due to \cite{van2010collaborative},
called \textit{collaborative double robustness}, which roughly states
a propensity score adjusting for the bias of the outcome regression
estimator is sufficient to yield double robustness. Therefore, if the
outcome regression is consistent, no propensity score adjustment is
necessary, thus avoiding variance inflation and positivity
violations. Likewise, a propensity score adjusting only for the
point-wise bias of the outcome regression, suffices for
consistency. This result was used in a series of papers to develop a
number of estimators collectively known as \textit{collaborative
  targeted minimum loss based estimators}
\citep[C-TMLE,][]{van2010collaborative,susan2010application,ju2017scalable,
  ju2017collaborative, ju2018collaborative}. These instances of C-TMLE
solve the trade-off between unconfoundedness and positivity by
introducing joint model selection techniques for the outcome
regression and propensity score. They can be described as model
selection techniques for the propensity score that optimize a suitably
constructed loss function which takes into account the outcome
regression bias. For example, the original C-TMLE was developed as a
variable selection tool using a greedy search % \citep[see page 305
% of][]{Gruber2011}
. The refinements of \citeauthor{ju2018collaborative}
extended C-TMLE to more general model selection frameworks with
continuously indexed candidate estimators for the propensity score
such as $\ell_1$ regularization \citep{ju2017collaborative}. The model
selection approaches employed by existing C-TMLE methods have a time
complexity that, in the best case scenario, grows linearly with the
dimension of the adjustment vector. This time complexity may be
computationally prohibitive in certain big data settings. Furthermore,
it is not clear how these model selection approaches can be
generalized to general data-adaptive estimators, for example
tree-based approaches, support vector machines, neural networks, or
learning ensembles.

Our main contribution and innovation is to present an alternative
collaborative double robustness result, whereby we reduce the
dimension of the propensity score through what we define as the
$e$-score. % This result is inspired by ideas recently proposed by
% \cite{van2014targeted,benkeser2016doubly,diaz2017doubly} in the
% context of doubly robust asymptotic distributions.
The $e$-score and its double robustness property allows us to propose
estimation methods that do not involve complex model selection
algorithms and are therefore completely scalable as well as
generalizable to any initial data-adaptive estimator of the propensity
score. Our second main contribution is to study the asymptotic
distributions of the proposed collaborative estimator under consistent
estimation (and convergence rates) of both nuisance parameters. This
asymptotic result is fundamental to the construction of valid
confidence intervals and hypothesis tests.

\section{Notation and Inferential Problem}\label{sec:problem}
Let $W$ denote a vector of observed baseline variables, let $A$ denote
a treatment indicator, and let $Y$ denote the outcome of
interest. Throughout, we assume that $Y$ takes values on $[0, 1]$.
The word \textit{model} here refers to a set of probability
distributions for the observed data $O=(W, A, Y)$. We assume that the
true distribution of $O$, denoted by $\PP$, is an element of the
nonparametric model, denoted by $\cal M$, and defined as the set of
all distributions of $O$ dominated by a measure of interest $\nu$.
Assume we observe an i.i.d. sample $O_1,\ldots,O_n$, and denote its
empirical distribution by $\Pn $. For a general distribution $P$ and a
function $f$, we use $Pf$ to denote $\int f(o)dP(o)$.

Let $Y_a:a\in\{0,1\}$ denote the counterfactual outcome that would be
observed in a hypothetical world in which $P(A=a)=1$. The typical
observational study is focused on estimation of the counterfactual
expectations $E(Y_a)$, or contrasts between them. We focus on
estimating $E(Y_1)$; estimators of $E(Y_0)$ may be constructed using
symmetric arguments. We use $m(w)$ to denote the outcome regression
$E_{\PP}(Y\mid A=1,W=w)$, $g(w)$ to denote the propensity score
$\PP(A=1\mid W=w)$.

% We will sometimes partition the covariate vector into three
% components: $W=(W_I,W_C,W_P)$. $W_I$ are instrumental variables,
% defined as causes of $A$ that are otherwise unrelated to $Y$; $W_C$
% are all true confounders, defined as simultaneous causes of $A$ and
% $Y$; and $W_P$ are all prognostic variables, defined as variables
% that are predictive of $Y$ but are independent of $A$. This setup can
% be represented graphically as in Figure~\ref{fig:dag}, where we have
% omitted exogenous variables for simplicity.

% \begin{figure}[!htb]
%   \centering
%   \begin{tikzpicture}
%     \Vertex{0, 0}{A}
%     \Vertex{3, 0}{Y}
%     \Vertex{3/2, 2}{W_C}
%     \Vertex{0, 2}{W_I}
%     \Vertex{3, 2}{W_P}
%     \Arrow{W_C}{A}{black}
%     \Arrow{W_I}{A}{black}
%     \Arrow{W_C}{Y}{black}
%     \Arrow{W_P}{Y}{black}
%     \Arrow{A}{Y}{black}
%   \end{tikzpicture}
%   \caption{Directed acyclic graph.}
%   \label{fig:dag}
% \end{figure}

% In Figure~\ref{fig:dag}, we adopt the setup of
% \cite{vanderweele2011new}, who propose to adjust for any covariate
% that is either a cause of treatment or a cause of the outcome or
% both. They show that irrespective of the true causal structure, if
% there exists some subset of the observed covariates that suﬃce to
% control for confounding, then the set obtained by this criterion will
% also yield identifiability. We assume the covariate vector can be
% partitioned as in Figure~\ref{fig:dag}, but we do not know which
% variables belong in each group.
We introduce the following
assumptions, which are standard in the causal inference literature.

\begin{assumptioniden}[No unmeasured confounders]\label{ass:random}
  $A$ is independent of $Y_1$ conditional on $W$.
\end{assumptioniden}
\begin{assumptioniden}[Strong positivity]\label{ass:positivity}
  $\PP\{g(W) > \epsilon\} = 1$ for some $\epsilon>0$.
\end{assumptioniden}

Assumption \ref{ass:random} states that treatment assignment is
randomized within strata of the covariates, either by nature or by
experimentation. We make assumptions \ref{ass:random} and
\ref{ass:positivity} throughout the manuscript. The mean
counterfactual outcome $E(Y_1)$ is identified from the distribution
$\PP$ of the observed data as $\theta = E_{\PP}\{m(W)\}$ \cite[see
e.g.,][]{Pearl00}. We define the target parameter mapping as
$\theta(\PP) = E_{\PP}\{m(W)\}$.

\subsection{Existing estimators and asymptotic
  properties}\label{sec:existing}

Doubly robust and efficient estimation of $\theta$ in the
non-parametric model proceeds as follows. Define the estimating
function
\begin{equation}
  D_{\eta, \theta}(O)=\frac{A}{g(W)}\{Y-m(W)\} + m(W) - \theta,\label{eq:defDeta}
\end{equation}
where $\eta=(g,m)$. The estimating function $D_{\eta,\theta}(O)$ is
a fundamental object for the construction of estimators of $\theta$
in the non-parametric model. On one hand, $D_{\eta,\theta}(O)$
characterizes the efficiency bound in the sense that all regular
estimators have a variance that is larger or equal to
$\seff=\var\{D_{\eta,\theta}(O)\}$ \citep{hahn1998role}. On the
other hand, for an estimate $\hat\eta$ of $\eta$, any estimator
$\hat \theta$ which is a solution of the estimating equation
$\Pn D_{\hat\eta, \theta}= 0$ on $\theta$ is doubly robust, meaning
that it is consistent if at least one of $\hat g$ and $\hat m$ is
consistent \citep[see Theorem 5.9 in][]{vanderVaart98}. Double
robustness follows from the fact that $\PP D_{\eta_1,\theta} = 0$ if
either $g_1=g$ or $m_1=m$, where $\eta_1$ denotes the limit of
$\hat\eta$ as $n\to \infty$.

The estimator obtained by directly solving the estimating equation
$\Pn D_{\hat\eta, \theta}= 0$ is also called the \textit{augmented
  inverse probability weighted estimator}, and we denote it with
$\aipw$. This estimator is often critiqued because it can lead to
estimates outside of the parameter space \citep{Gruber2010t}. Several
estimators have been proposed to remedy this issue \cite[see
e.g.,][]{Kang2007, Robins2007, tan2010bounded}. In this paper we focus
on the \textit{targeted minimum loss based estimation} (TMLE)
methodology, developed by \cite{vanderLaan&Rubin06}. We now briefly
review the construction of a TMLE. Further discussion on the
construction of the TMLE for this problem may be found in
\cite{Gruber2010t}. \cite{Porter2011} provides an excellent review of
other doubly robust estimators along with a discussion of their
strengths and weaknesses.

The TMLE of $\theta$ is defined as $\tmle=\theta(\tilde P)$, where
$\tilde P$ is an estimator of $\PP$ constructed to satisfy
$\Pn D_{\tilde \eta, \tmle}=0$. The estimator $\tilde P$ is
constructed by tilting an initial estimate $\hat P$ towards a solution
of the estimating equation, by means of parametric
submodel. Specifically, a TMLE may be constructed by fitting the
logistic regression model
\begin{equation}
  \logit  m_\beta(w) = \logit \hat m(w) + \beta \frac{1}{\hat
    g(w)},\label{eq:submodel}
\end{equation}
among observations with $A=1$. Here,
$\logit(p)=\log\{p(1-p)^{-1}\}$. In this expression $\beta$ is the
parameter of the model, $\logit \hat m(w)$ is an offset variable, and
the initial estimates $\hat m$ and $\hat g$ are treated as fixed. The
parameter $\beta$ is estimated through the empirical risk minimizer
\[\hat \beta = \arg\max_{\beta}\sum_{i=1}^n A_i \{Y_i\log
  m_\beta(W_i) + (1-Y_i)\log(1- m_\beta(W_i))\}.\] The tilted
estimator of $m(w)$ is defined as
$\tilde m(w) = m_{\hat \beta}(w) = \expit\{\logit \hat m(w)+
\hat\beta / \hat g(w)\}$, where $\expit(x) = \logit^{-1}(x)$. The
TMLE of $\theta$ is defined as
\[\tmle=\frac{1}{n}\sum_{i=1}^n \tilde  m(W_i).\]
Because the empirical risk minimizer of model (\ref{eq:submodel})
solves the score equation
\[\sum_{i=1}^n\frac{A_i}{\hat g(W_i)}\{Y_i -  m_{\hat
    \beta}(W_i)\}=0,\] it follows that $\Pn D_{\tilde \eta, \tmle}=0$
with $\tilde \eta = (\hat g, \tilde m)$. The analysis of the
asymptotic properties of the TMLE and other estimators that solve the
estimating equation $\Pn D_{\hat\eta,\hat\theta}=0$ may be based on
standard $M$-estimation and empirical process theory. % For example,
% Theorems 5.9 of \cite{vanderVaart98} may be used to prove consistency
% of $\tmle$ under \ref{ass:DR1} below.
% \begin{assumption}[Doubly robust consistency]\label{ass:DR1}
%   Let $||\cdot||$ denote the $L_2(\PP)$ norm defined as
%   $||f||^2=\int f^2 \d\PP$. Assume there exists $\eta_1=(g_1, m_1)$
%     with either $g_1 = g$ or $ m_1= m$ such that
%     $||\hat m - m_1||=o_P(1)$ and $||\hat g - g_1||=o_P(1)$.
% \end{assumption}
In particular, under regularity conditions including convergence of
$\hat g$ and $\hat m$ at rates at least as fast as $n^{-1/4}$, it may
be shown that $\tmle$ is asymptotically linear \citep[see e.g.,][for
all technical details]{vanderLaanRose11}:
\begin{equation*}
  \tmle - \theta=(\Pn - \PP)D_{\eta, \theta} + o_P\big(n^{-1/2}\big).\label{eq:tmleef}
\end{equation*}
% In light of Theorem 18.1 of \cite{kosorok2007introduction}, the above
% result also implies that $\tmle$ is a regular estimator. Loosely
% speaking (we provide a rigorous definition in Section~\ref{} below),
% regularity means that the above convergence is uniform in the sense
% that it holds under fluctuations of the order $n^{-1/2}$ in $\PP$.
Together with the above result, the CLT shows that $\tmle$ is
efficient in the sense that its asymptotic variance is equal to the
efficiency bound
\begin{equation}
  \seff=\var\{D_{\eta,\theta}(O)\}=E\left\{\frac{\sigma_0^2(W)}{g(W)}\right\} +
  E\{m(W)-\theta\}^2,\label{eq:effb}
\end{equation}
where $\sigma_0^2(w)=\var(Y\mid A=1, W=w)$. Inspection of this bound
reveals which variables must be selected in order to improve
precision. First, the conditional variance $\sigma_0^2(w)$ must be
small, which implies that all predictors of the outcome must be
included in the outcome regression, regardless of whether they are
necessary for confounder adjustment. Second, the propensity score
$g(w)$ must be bounded away from zero, which means that variables
that are predictors of $A$, but are unnecessary for confounder
adjustment, must be excluded \citep{hahn2004functional}.% At
% this point it is important to remind the reader that $\seff$ is the
% efficiency bound in the sense that it is the smallest possible
% variance attained by any given \textit{regular} estimator. Loosely
% speaking, regularity is a property ensuring that the limit in
% distribution of the estimator is uniform in the sense that it remains
% unchanged under fluctuations of the order $n^{-1/2}$ around
% $\PP$. Though regularity is certainly a desirable property, it seems
% natural to consider abandoning the requirement of regularity in the
% search for estimators with smaller variance. We will briefly come back
% to the topic of regularity in Section~\ref{sec:regularity}, when we
% analyze the asymptotic properties of our proposed estimator.

The estimator we propose to solve this problem is closely related to
the \textit{collaborative targeted minimum loss based estimator}
(C-TMLE) proposed by \cite{van2010collaborative}. C-TMLE is built upon
a property known as \textit{collaborative double robustness}, defined
as follows. To introduce collaborative double robustness, we will
require the following assumption:
\begin{assumption}[Doubly robust consistency]\label{ass:DR1}
  Let $||\cdot||$ denote the $L_2(\PP)$ norm defined as
  $||f||^2=\int f^2 \dd\PP$. Assume there exists $\eta_1=(g_1, m_1)$
  with either $g_1 = g$ or $ m_1= m$ such that
  $||\hat m - m_1||=o_P(1)$ and $||\hat g - g_1||=o_P(1)$.
\end{assumption}
We reproduce the original theorem \citep[Theorem 2
of][]{van2010collaborative}:
\begin{theorem}[Collaborative double robustness]\label{theo:cdr1}
  Let $s(w)$ denote the asymptotic pointwise bias in estimation of
  $m(w)$. That is, define $s(w)=m(w) - m_1(w)$. Let
  $g_s(w) = \PP(A=1\mid s(W) = s(w))$. Assume
  $\eta_1=(g_1, m_1)$ is such that either $g_1 = g_s$, or
  $m_1=m$. Then $\PP D_{\eta_1, \theta} = 0$.
\end{theorem}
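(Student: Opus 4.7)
The plan is to split into the two cases covered by the disjunction ``$g_1=g_s$ or $m_1=m$'' and verify $\PP D_{\eta_1,\theta}=0$ directly from the definition
\[
D_{\eta_1,\theta}(O)=\frac{A}{g_1(W)}\{Y-m_1(W)\}+m_1(W)-\theta.
\]
In both cases I would take iterated expectations, first conditioning on $W$ so that $Y$ can be replaced by $m(W)$ and $A$ by $g(W)$, and then exploit the identification $\theta=E\{m(W)\}$.

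First consider the easy branch $m_1=m$. Conditioning on $W$ inside the first term gives $E[A(Y-m(W))/g_1(W)\mid W]=g(W)\cdot E(Y-m(W)\mid A=1,W)/g_1(W)=0$ by the very definition of $m$. What remains is $E\{m(W)\}-\theta$, which vanishes by identification. This step is essentially bookkeeping and I do not expect any difficulty.

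The interesting branch is $g_1=g_s$, where I must show
\[
E\!\left[\frac{A}{g_s(W)}\{Y-m_1(W)\}\right]=\theta-E\{m_1(W)\}=E\{s(W)\}.
\]
Conditioning on $W$ turns the left side into $E\{g(W)\,s(W)/g_s(W)\}$. The key move is a second conditioning, this time on the sub-$\sigma$-algebra $\sigma(s(W))$: since $s(W)$ is $\sigma(W)$-measurable, the tower property gives $E\{g(W)\mid s(W)\}=E\{E(A\mid W)\mid s(W)\}=E\{A\mid s(W)\}=g_s(W)$ by definition of $g_s$. Substituting, the ratio $E\{g(W)\mid s(W)\}/g_s(W)$ collapses to one and the expression reduces to $E\{s(W)\}$, as required.

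The main obstacle, such as it is, is recognizing that $s(W)$ is $W$-measurable so the iterated expectation $E\{g(W)\mid s(W)\}=g_s(W)$ is legitimate; once that observation is made, each case is a one-line calculation and the two branches combine to give $\PP D_{\eta_1,\theta}=0$.
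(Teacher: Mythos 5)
Your proof is correct. The paper does not actually prove Theorem~\ref{theo:cdr1} itself (it reproduces the statement from \cite{van2010collaborative} and defers to that reference), but your argument --- reduce the first term to $E\{g(W)s(W)/g_s(W)\}$ by conditioning on $W$, then collapse $E\{g(W)\mid s(W)\}$ to $g_s(W)$ by the tower property on $\sigma(s(W))$ --- is the standard proof and is exactly the same conditioning strategy the paper uses in its Supplementary Materials proof of the analogous Theorem~\ref{theo:cdr2}, where the integral is first conditioned on $g(w)$ and then on $r_1(w)$ to make the expression vanish.
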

The above theorem implies that the probability $g(W)$ does not need to
adjust for the full covariate vector $W$. A propensity score $g_s(W)$
that only adjusts for the residual error $s(W)$ is sufficient to
obtain a doubly robust estimating equation. This dimension reduced
propensity score has lower or equal variance to the original
propensity score. In particular, since
$g_s(w) = E_{\PP}\{g(W)\mid s(W) = s(w))$, the law of total variance
yields
$\var\{g(W)\} = E\left\{\var[g(W)\mid s(W)]\right\} + \var\{g_s(W)\}$,
which implies $\var\{g(W)\} \geq \var\{g_s(W)\}$.  Thus, usage of
$g_s(w)$ instead of $g(w)$ can generate efficiency gains in estimation
of $\theta$. Though this result is more general, a particular instance
in which it is helpful is in the presence of instrumental
variables. If the estimator $\hat m$ is misspecified but correct in
the sense that $m_1$ does not depend on the instruments, then the
propensity score does not need to adjust for the instruments,
irrespective of their correlation with $A$. This formalizes the advice
of \cite{brookhart2006variable} and others in the sense that only
variables related the outcome should be included in the propensity
score model. The original C-TMLE algorithm operates under a sparsity
assumption that the residual bias $s(W)$ is a function of a subset of
the covariates $W$, and proceeds by constructing clever variable
selection algorithms to find such subset. In the following section we
introduce the $e$-score, which is inspired in the collaborative double
robustness result of Theorem~\ref{theo:cdr1}. Unlike the C-TMLE, the
$e$-score reduces the variance of the propensity score without
sparsity assumptions, therefore allowing us to construct methods
applicable to general data-adaptive estimators of the propensity score
such as those based on machine or statistical learning.

\section{Collaborative double robustness based on the
  $e$-score}\label{sec:escore}

We start this section by presenting an alternative collaborative
double robustness theorem, which provides the foundation for our
proposed estimator. Our result is based on the collaborative double
robustness principle that, when the outcome regression is consistently
estimated at the appropriate rate, then the propensity score may be
simply defined as $\PP(A=1)$. More generally, a propensity score that
adjusts for the asymptotic bias of the outcome regression estimator
suffices to attain double robustness (Theorem~\ref{theo:cdr1}). 
% In
% this section we pursue an alternative collaborative double robustness
% result, which will allow us to construct more general collaborative
% doubly robust estimators that do not rely on model selection
% techniques.

\begin{definition}[$e$-score]
  Assume $g_1$ and $m_1$ are as in \ref{ass:DR1}. Let
  \[r_1(w) = E\big\{Y - m_1(W)\mid A=1,g_1(W) = g_1(w)\big\}\]
  quantify the outcome model misspecification as a function of the
  possibly misspecified limit of the propensity score estimator. The
  $e$-score is defined as
  \[  e_1(w) = E\left\{g_1(W)\mid r_1(W) = r_1(w)\right\}.\]
\end{definition}
Theorem~\ref{theo:cdr2}, stated rigorously below, teaches us that an
estimator based on the efficient influence function, but constructed
using $e_1$ instead of $g_1$, maintains the double robustness
property. To introduce this result, define the estimating function
\begin{equation}
  D_{\lambda,\theta}(O) = \frac{A}{e(W)}\{Y-m(W)\} + m(W) - \theta,\label{eq:defDlambda}
\end{equation}
where we have denoted $\lambda = (e, m)$.
\begin{theorem}[Double robustness based on the $e$-score]\label{theo:cdr2}
  Let $(g_1,m_1)$ be such that either $g_1=g$ or $m_1=m$. Let
  $\lambda_1 = (e_1, m_1)$. Then $\PP D_{\lambda_1,\theta} = 0$.
\end{theorem}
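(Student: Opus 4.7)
The plan is to split into the two cases that double robustness allows. In the case $m_1 = m$, the computation is essentially the standard AIPW argument: conditioning on $W$ gives $E[A(Y - m(W)) \mid W] = g(W)\{m(W) - m(W)\} = 0$, hence the first term of $\PP D_{\lambda_1,\theta}$ vanishes and the remaining $E[m(W)] - \theta$ is zero by the identification formula. So the interesting case is $g_1 = g$, and everything below targets that case.

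The key structural observation is that because $A \mid W \sim \text{Bern}(g(W))$, we have $A \indep W \mid g(W)$. This lets us rewrite
\begin{equation*}
  r_1(w) = E\{Y - m_1(W) \mid A=1, g(W)=g(w)\} = E\{m(W) - m_1(W) \mid g(W) = g(w)\},
\end{equation*}
where the first equality uses $E(Y\mid A=1, W) = m(W)$ and tower, and the second drops the conditioning on $A=1$ by the conditional independence above. In particular, $r_1(W)$ is $\sigma(g(W))$-measurable, and so $e_1(W) = E\{g(W)\mid \sigma(r_1(W))\}$ is $\sigma(r_1(W))$-measurable with $\sigma(r_1(W)) \subseteq \sigma(g(W))$.

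Next I would compute the expectation of the weighted residual via a nested conditioning argument on these three sigma algebras. Conditioning on $W$ first:
\begin{equation*}
  E\!\left[\frac{A}{e_1(W)}\{Y - m_1(W)\}\right] = E\!\left[\frac{g(W)}{e_1(W)}\{m(W) - m_1(W)\}\right].
\end{equation*}
Then, conditioning on $\sigma(g(W))$ inside, and using the new expression for $r_1$,
\begin{equation*}
  E\bigl[g(W)\{m(W) - m_1(W)\} \mid \sigma(g(W))\bigr] = g(W)\, r_1(W).
\end{equation*}
A second tower step down to $\sigma(r_1(W))$, using $\sigma(r_1(W))$-measurability of $r_1(W)$ and the definition of $e_1(W)$, gives
\begin{equation*}
  E\bigl[g(W)\{m(W) - m_1(W)\} \mid \sigma(r_1(W))\bigr] = r_1(W)\, e_1(W).
\end{equation*}
Plugging this back and cancelling $e_1(W)$ yields $E\{r_1(W)\}$, which by the rewrite of $r_1$ and an outer tower equals $E\{m(W) - m_1(W)\}$. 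Combined with the $E\{m_1(W)\} - \theta$ term, this gives $\PP D_{\lambda_1,\theta} = E\{m(W)\} - \theta = 0$.

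The main obstacle is precisely the step of recognizing that in the case $g_1 = g$ the conditioning on $\{A = 1\}$ in the definition of $r_1$ can be dropped, which collapses $r_1(W)$ to a $\sigma(g(W))$-measurable object. Once that is in place, the rest is a careful chain of conditional expectations through $\sigma(W) \supseteq \sigma(g(W)) \supseteq \sigma(r_1(W))$; the bookkeeping requires keeping track of which quantities are measurable with respect to which sigma algebra so that the correct factor is pulled out at each stage. Everything else (the $m_1 = m$ case, outer expectations, use of $\theta = E\{m(W)\}$) is routine given the identification result.
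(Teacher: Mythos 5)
Your proof is correct and follows essentially the same route as the paper's: both arguments hinge on observing that when $g_1=g$ the conditioning on $A=1$ in $r_1$ can be dropped so that $r_1(W)$ becomes a function of $g(W)$, and then iterating conditional expectations through $\sigma(W)\supseteq\sigma(g(W))\supseteq\sigma(r_1(W))$ and invoking $e_1(W)=E\{g(W)\mid r_1(W)\}$. The only difference is presentational — the paper shows the residual term $\int \{r_1/e_1\}\{a-e_1\}\dd\PP$ vanishes, while you show the weighted term equals $E\{r_1(W)\}=E\{m(W)-m_1(W)\}$ and cancel it against the plug-in discrepancy — which is the same computation rearranged.
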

We note that this result is different in nature from both standard and
collaborative double robustness. The first sense in which
$D_{\lambda,\theta}$ is robust is similar to standard double
robustness: if the outcome regression is correctly specified then the
propensity score may be replaced by the $e$-score, which can be any
function $e_1:w\mapsto (0,1)$. The second way in which
$D_{\lambda,\theta}$ is robust is novel: if the outcome regression is
misspecified, the propensity score may be replaced by the $e$-score,
provided that the propensity score is consistently estimated. In
comparison to collaborative double robustness
(Theorem~\ref{theo:cdr1}), the result in Theorem~\ref{theo:cdr2} is
about consistent estimation of the propensity score that conditions on
the full vector $W$, as opposed to the reduced-data propensity score
required in Theorem~\ref{theo:cdr1}.

The main advantage of the $e$-score in comparison to the propensity
score is the reduction of the variability of the weights by only
adjusting for the residual bias, as measured by $r_1(w)$. In
particular, if $m_1=m$, then the law of iterated expectation shows
that $r_1(w)=0$, and the $e$-score is a constant equal to the constant
$E[g_1(W)]$. If $g_1=g$, the $e$-score reduces the variance of $g$
through adjustment for the outcome residual bias as quantified by
$r_1(w)$.  To further illustrate this, consider a partition of
$W=(W_I,W_P)$. Assume that, unknown to the researcher, the causal
structure of the variables is as depicted in the directed acyclic
graph of Figure~\ref{fig:dag}. The fact that the relationship between
$A$ and $Y$ is unconfounded is not known to the researcher, so she
decides to adjust for the full vector $W$. This unnecessarily
increases the efficiency bound of the model and the variance of doubly
robust estimators. Usage of the $e$-score fixes this problem as
follows. Assume that the estimator $\hat m$ is inconsistent but
sensible in the sense that $m_1$ only depends on $w_P$. Since $g$ only
depends on $w_I$, and $W_I\indep W_P$, we have $r_1(w)$ is a constant
equal to $E[m(W_P)-m_1(W_P)]$, and the $e$-score is equal to
$\PP(A=1)$, therefore recovering the efficiency bound of a model in
which the causal structure of Figure~\ref{fig:dag} is known.

\begin{figure}[!htb]
  \centering
  \begin{tikzpicture}
    \Vertex{0, 0}{A}
    \Vertex{2, 0}{Y}
    \Vertex{-2, 0}{W_I}
    \Vertex{4, 0}{W_P}
    \Arrow{W_I}{A}{black}
    \Arrow{W_P}{Y}{black}
    \Arrow{A}{Y}{black}
  \end{tikzpicture}
  \caption{Directed acyclic graph.}
  \label{fig:dag}
\end{figure}

% In particular, under the conditions of Theorem~\ref{theo:cdr2}, the
% $e$-score does not depend on instrumental variables, which are defined
% as causes of $A$ that are otherwise unrelated to $Y$, regardless of
% whether they are present in the propensity score estimate $\hat
% g(w)$. To see this, consider first $m_1=m$. In this case,
% $r_1(w)=0$, and the $e$-score is constant on $w$. On the other hand,
% any reasonable estimator $\hat m$ will be such that $m_1$ does not
% depend on instrumental variables, since they are only related to $Y$
% through $A$. Thus, if $g_1=g$, then $r_1(w)$ is integrated over the
% distribution of the instrumental variables, therefore removing them
% from $e_1(w)$. Due to the importance of this result, we state it
% explicitly in the following lemma.
% \begin{lemma}[Instrumental variables]\label{lemma:varsel}
%   Let $W_I\subseteq W$ denote the instrumental variables in $W$.
%   \begin{enumerate}[label=(\roman*)]
%   \item If $m_1=m$, then $e_1(W)$ is a constant.\label{lemma:varsel1}
%   \item If $g_1=g$ and $m_1$ is not a function of $W_I$, then $e_1$
%     is not a function of $W_I$.\label{lemma:varsel2}
%   \end{enumerate}
% \end{lemma}

If the outcome model misspecification is such that the residual
$Y - m_1(W)$ is a monotone function of $g_1(W)$, then we have
$e_1(W)=g_1(W)$. In this case our collaborative doubly robustness
reduces to standard double robustness. To avoid this pathological
case, we recommend to explicitly include $\hat g(W)$ as a covariate
when computing the estimator $\hat m(W)$.

We now proceed to discuss several alternatives to construct a
collaborative doubly robust estimator based on the $e$-score. % Before
% we do so, we discuss two important remarks about
% Theorem~\ref{theo:cdr2}.

% \begin{remark}
% \end{remark}

% \begin{remark}
% \end{remark}

\section{Proposed Estimators}\label{sec:proposal}

In this section we propose two estimators for $\theta$ based on the
collaborative double robustness result of
Theorem~\ref{theo:cdr2}. Both estimators are constructed under the
targeted minimum loss based framework. The first estimator is purely
based on obtaining a tilted estimator $\tilde m$, which targets a
solution to an estimating equation based on $D_{\lambda, \theta}$. A
large sample analysis of this estimator reveals that it is likely not
asymptotically linear in many important situations. As a solution to
this flaw, we propose a second estimator, in which we target
additional estimating equations that yield asymptotic linearity.

To start, we discuss estimators of $r_1(w)$ and $e_1(w)$. Note that these
quantities are one-dimensional regression functions which can be
consistently estimated using non-parametric estimators, e.g., kernel
smoothing, smoothing splines, the highly adaptive lasso, etc.. For
example, a for a second-order kernel function $K_h$ with bandwidth $h$
a kernel estimator of $r_1(w)$ would be defined as
\[  \hat r(w) = \frac{\sum_{i = 1}^nA_i\,K_{\hat h}\{\hat g(W_i) -
    \hat g(w)\}\{Y_i - \hat m(W_i)\}}{\sum_{i=1}^nA_i\,K_{\hat h}\{\hat g(W_i) -
    \hat g(w)\}},
\]
and a kernel estimator of $e_1(w)$ would be defined analogously. Once
$\hat e(w)$ is computed, a variance-reduced TMLE can be computed by
applying the TMLE algorithm presented in Section~\ref{sec:existing}
with $\hat g(w)$ replaced by $\hat e(w)$. Denote such estimator with
$\etmlen$. The analysis of the asymptotic properties of $\etmlen$
follows standard arguments in the analysis of $M$-estimators, as in
Section~\ref{sec:existing}. Define the following Donsker condition:
\begin{assumption}[Donsker]\label{ass:donsker}
  Let $\eta_1$ be as in~\ref{ass:DR1}. Assume the class of
  functions $\{\lambda=(e, m):||m - m_1||<\delta, ||e-e_1||<\delta\}$ is
  Donsker for some $\delta >0$.
\end{assumption}
Under \ref{ass:DR1} and \ref{ass:donsker}, a straightforward
application of Theorems 5.9 and 5.31 of \cite{vanderVaart98} \citep[see also
example 2.10.10 of][]{vanderVaart&Wellner96} yields
\begin{equation}
  \etmlen-\theta= \beta(\hat\lambda) +
  (\Pn - \PP)D_{\lambda_1, \theta} + o_P\big(n^{-1/2} + |\beta(\hat\lambda)|\big),\label{eq:wh}
\end{equation}
where $\beta(\hat\lambda) = \PP D_{\hat\lambda, \theta}$. From
equation (\ref{eq:wh}) we can see that the only missing element to
understand the asymptotic distribution of $\etmlen$ is the ``drift''
term $\beta(\hat\lambda)$. If this term, which is equal to
\begin{equation}
  \beta(\hat\lambda) = \int
  \frac{1}{\hat e}(g - \hat e)(m - \hat m)\dd\PP,\label{eq:defbeta}
\end{equation}
can be shown to be asymptotically linear in the sense that
\begin{equation}
\beta(\hat\lambda)=(\Pn-\PP)S + o_P(n^{-1/2}),\label{eq:aslinb}
\end{equation}
for some function $S$ of $O$ that may depend on $\PP$, then asymptotic
linearity and normality of $\etmlen$ follows. Unfortunately,
$\beta(\hat\lambda)$ is a complex term that cannot be expected to
satisfy (\ref{eq:aslinb}) in general. Recall that $\hat m$ and
$\hat g$ are constructed using general data-adaptive methods, with the
only constraints that the estimators must satisfy conditions
\ref{ass:DR1} and \ref{ass:donsker}. These conditions are satisfied
for a large number of estimators for which (\ref{eq:aslinb}) does not
hold. See for example \citep{bickel2009simultaneous} for rate results
on $\ell_1$ regularization, \citep{wager2015adaptive} for rate results
on regression trees, and \citep{chen1999improved} for neural
networks. These conditions are also satisfied by the highly adaptive
lasso \citep{benkeser2016highly} under the assumption that the true
regression function is right-hand continuous with left-hand limits and
has variation norm bounded by a constant. Although all of these
methods satisfy \ref{ass:DR1} and \ref{ass:donsker}, they do not
generally satisfy ({\ref{eq:aslinb}).

\subsection{Achieving asymptotic linearity}

We now propose a second estimator, $\etmle$, which is asymptotically
linear. Asymptotic linearity is important because it implies
asymptotic normality and facilitates the construction of confidence
intervals and hypothesis tests. This goal will be achieved under the
assumption of consistent estimation of both nuisance parameters, $g$
and $m$. Our estimator $\hat\lambda$ guarantees the asymptotic
linearity of $\beta(\hat\lambda)$, under certain conditions, by
tilting the initial estimator towards a solution of a score equation
carefully constructed to target $\beta(\hat\lambda)$. To develop this
construction, we start by requiring specific convergence rates for all
nuisance estimators:
\begin{assumption}[Consistency rate of nuisance estimators]\label{ass:DR2}
  Assume $||\hat m - m||\,||\hat g - g||=o_P(n^{-1/2})$.
\end{assumption}
Under the above condition, which is standard in the analysis of doubly
robust estimators, the following lemma provides a representation of
the drift term in terms of score functions. This representation is
achieved through the following univariate regression:
\begin{equation*}
  q(w) = E\left\{A\left(\frac{1}{\hat e(W)} -
      \frac{1}{\hat g(W)}\right)\,\bigg|\,\, \hat m(W) = \hat m(w)\right\},
  % b_0(w) = E\left\{A\left(e_1^{-1}(W) - g^{-1}(W)\right)\mid
  %   m(W) = m(w)\right\}.
\end{equation*}
where the expectation is taken with respect to the distribution of
$(A,W)$, taking $\hat e$, $\hat g$, and $\hat m$ as fixed functions.
Like $r_1$ and $e_1$, we estimate $q$ consistently through
non-parametric univariate regression methods.% kernel
% regression methods. In particular, for some bandwidth $h\to 0$, let
% $\hat q$ be defined pointwise as
% \[\hat q(w) = \frac{\sum_{i = 1}^n\,K_{\hat h}\{\hat m(W_i) -
%     \hat m(w)\}[A\{\hat e(W_i)^{-1} - \hat
%     g(W_i)^{-1}\}]}{\sum_{i=1}^n\,K_{\hat h}\{\hat m(W_i) - \hat
%     m(w)\}}\] 
We have the following result:
\begin{lemma}[Asymptotic representation of the drift term]\label{lemma:driftrep}
  Let $h(w) = q(w)/g(w)$, and define the score function
  \[ S_h(O) = A\, h(W)\{Y - \hat m(W)\}\] Under
  \ref{ass:DR2} we have $\beta(\hat\lambda) = \PP S_{h} + o_P(n^{-1/2})$.
\end{lemma}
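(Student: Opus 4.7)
The plan is to decompose $\beta(\hat\lambda)$ so that a leading term matches $\PP S_h$ while the remainders are $o_P(n^{-1/2})$ under positivity~\ref{ass:positivity} and the product-rate condition~\ref{ass:DR2}. Starting from $\beta(\hat\lambda) = \int \tfrac{(g - \hat e)(m - \hat m)}{\hat e}\,d\PP$ in~(\ref{eq:defbeta}), I would use the identity
\[
\frac{g - \hat e}{\hat e} = g\Big(\frac{1}{\hat e} - \frac{1}{\hat g}\Big) + \frac{g - \hat g}{\hat g}
\]
to split $\beta(\hat\lambda)$ into two integrals. The integral of $(g - \hat g)(m - \hat m)/\hat g$ is bounded by $\epsilon^{-1}\|g - \hat g\|\,\|m - \hat m\|$ via Cauchy--Schwarz, which is $o_P(n^{-1/2})$ by~\ref{ass:DR2}.

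For the leading piece $E[g(1/\hat e - 1/\hat g)(m - \hat m)]$, I would apply the tower identity $E[A\,f(W)(Y - \hat m(W))] = E[g(W)\,f(W)(m(W) - \hat m(W))]$ for $W$-measurable $f$ (by conditioning on $W$ and using $E[A\mid W]=g$ and $E[Y\mid A=1,W]=m$) to rewrite it as $E[A(1/\hat e - 1/\hat g)(Y - \hat m)]$. The same identity applied to $S_h(O) = A\,h(W)(Y - \hat m(W))$ with $h = q/g$ gives $\PP S_h = E[q(W)(m - \hat m)]$. The difference $\beta(\hat\lambda) - \PP S_h$ therefore reduces to $E[(U - q)(m - \hat m)] + o_P(n^{-1/2})$, where $U := g(1/\hat e - 1/\hat g)$.

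Because $q(W) = E[U \mid \hat m(W)]$, the projection residual $U - q$ is $L_2(\PP)$-orthogonal to every $\hat m(W)$-measurable function; since $\hat m$ itself is such a function,
\[
E[(U - q)(m - \hat m)] = E[(U - q)(m - E[m \mid \hat m(W)])],
\]
which by Cauchy--Schwarz and the projection bound $\|m - E[m \mid \hat m(W)]\| \leq \|m - \hat m\|$ (since $\hat m$ is a candidate $\hat m(W)$-measurable predictor of $m$) is at most $\|U - q\|\cdot \|m - \hat m\|$. The main obstacle is controlling $\|U - q\|$ at the rate $O_P(\|g - \hat g\|)$ so that~\ref{ass:DR2} closes the argument. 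I would attempt this by further splitting $U = (g - \hat g)/\hat e + (\hat g - \hat e)/\hat e$: the first summand is $O_P(\|g - \hat g\|)$ directly, while the second depends on $W$ only through $\hat g(W)$ because $\hat e$ is constructed as a nonparametric regression function of $\hat g$; its projection residual against $\hat m(W)$ is negligible provided $\hat g(W)$ enters $\hat m(W)$ as a predictor, as recommended in the comment following Theorem~\ref{theo:cdr2}. Pinning down this last step at a rate compatible with~\ref{ass:DR2} is the crux of the proof.
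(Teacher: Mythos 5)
Your opening moves coincide with the paper's: the same splitting of $(g-\hat e)/\hat e$ into $g(1/\hat e - 1/\hat g)$ plus $(g-\hat g)/\hat g$, the same Cauchy--Schwarz dispatch of the second piece under \ref{ass:DR2}, and the same tower-property reduction of $\PP S_h$ to $\int q\,(m-\hat m)\dd\PP$. The divergence, and the gap, lies in the remainder you correctly isolate: with $U = g(1/\hat e - 1/\hat g)$ and $q = E[U\mid \hat m(W)]$, you must show that $E\big[(U-q)\{m - E[m\mid\hat m(W)]\}\big]$ is $o_P(n^{-1/2})$. Your Cauchy--Schwarz bound gives $\|U-q\|\,\|m-\hat m\|$, and $\|U-q\|$ does \emph{not} tend to zero in general (its limit is the residual of $g/e_1 - 1$ after projection onto the sigma-algebra generated by the limit of $\hat m$, which vanishes only if $g$ is a function of that limit), so \ref{ass:DR2} alone cannot close the argument. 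Your proposed rescue fails at its key claim: $(\hat g - \hat e)/\hat e$ is indeed a function of $\hat g(W)$, but feeding $\hat g(W)$ into the regression that produces $\hat m$ does not make $\hat g(W)$, or functions of it, measurable with respect to $\sigma(\hat m(W))$ --- $\hat m$ collapses its inputs many-to-one --- so there is no reason for the projection residual of $(\hat g-\hat e)/\hat e$ onto $\sigma(\hat m(W))$ to be small. (A minor algebraic slip along the way: $g(1/\hat e - 1/\hat g) = (g-\hat g)/\hat e + (\hat g - \hat e)/\hat e - (g-\hat g)/\hat g$; the term you omit is $O_P(\|g-\hat g\|)$ and harmless.)

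For comparison, the paper handles this same point by introducing the auxiliary regression $q^0(w) = E\{A(1/\hat e(W) - 1/\hat g(W))\mid m(W) = m(w)\}$, which conditions on the \emph{true} outcome regression, pairing $q^0$ with $m$ and $q$ with $\hat m$ via the tower property, and then asserting that the cross term $\int(q - q^0)\,m\,\dd\PP$ vanishes by iterated expectations. A direct computation shows that this cross term equals $-E\big[(U-q)\{m-E[m\mid\hat m(W)]\}\big]$, i.e.\ it is algebraically the same object as your residual; iterated expectations annihilate $E[(U-q)f(\hat m(W))]$ for $\sigma(\hat m(W))$-measurable $f$, but $m(W)$ is not such a function, so the step is not a pure identity. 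In short, you have correctly located the delicate step of this lemma and your honest flagging of it is to your credit, but neither your patch nor a bare appeal to \ref{ass:DR2} controls it; some additional argument or assumption governing $\|U - E[U\mid\hat m(W)]\|$, or the alignment of $m - E[m\mid\hat m(W)]$ with $U-q$, is required to complete the proof.
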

The proof of the lemma is presented along with all other proofs in the
Supplementary Materials. The above lemma sheds light on the necessary
characteristics of an estimator $\hat\lambda$ in order to satisfy
(\ref{eq:aslinb}). In particular, the proof of Theorem~\ref{theo:dr}
shows that asymptotic linearity of $\hat \theta$ requires that
$\hat\lambda$ solve the score equation $\Pn S_{\hat h}=0$ for
$\hat h(w) = \hat q(w)/\hat g(w)$.

We now describe in detail our proposed estimator, which we denote
$\etmle$, and define through the following iterative algorithm.
\begin{enumerate}[label = Step~\arabic*., align=left, leftmargin=*]
\item \textit{Initial estimators.} Obtain initial estimators $\hat g$
  and $\hat m$ of $g$ and $m$. Construct estimators $\hat r$,
  $\hat e$, and $\hat h$ using kernel regression estimators
  as described above.
% \item \textit{Compute auxiliary covariates.} For each
%   subject, compute the auxiliary covariates
%   \[M_1(w)=\frac{1}{\hat e(w)},\,
%     M_2(w)=\frac{\hat b(w)}{\hat h(w)}\]
\item \textit{Solve estimating equations.} Estimate the parameter
  $\beta = (\beta_1, \beta_2)$ in the logistic tilting model
  \begin{align}
    \logit  m_\beta(w) &= \logit \hat m(w)  + \beta_{1} {\hat e(w)}^{-1} +
                            \beta_{2} \hat h(w),\label{eq:submodelY}
  \end{align}
  Here, $\logit \hat m(w)$ is an offset variable (i.e., a variable
  with known parameter equal to one). The parameter
  $(\beta_{1}, \beta_{2})$ may be estimated through a logistic
  regression model of $Y$ on the bivariate
  vector $[{\hat e(W)}^{-1}, \hat h(W)]$, with no intercept and with
  offset $\logit \hat m(W)$ among observations with $A=1$. Let
  $\hat\beta$ denote these estimates.
\item \textit{Update estimator and compute $e$-TMLE.} Define the updated
  estimator as $\tilde m =  m_{\hat \beta}$. The proposed TMLE of
  $\theta$ is defined as
  \[\etmle = \frac{1}{n}\sum_{i=1}^n \tilde m(W_i).\]
\end{enumerate}

In order to prove the asymptotic linearity of $\etmle$, we require the
following additional consistency assumption on the nuisance estimators
\begin{assumption}[Consistency rate of $(m,q)$ nuisance
  estimators]\label{ass:DR3}
  Assume $||\hat m - m||\,||\hat q - q||=o_P(n^{-1/2})$.
\end{assumption}

A sufficient assumption for the above condition to hold is that
$||\hat m - m||=o_P(n^{-1/4})$ and $||\hat q - q||=o_P(n^{-1/4})$.
Note that the latter convergence is purely about the consistency of
the smoothing method used to obtain $\hat q$, because the covariates
$\hat m$, $\hat g$, and $\hat e$ are the same in $\hat q$ and
$q$. Non-parametric smoothing methods can be expected to satisfy this
assumption in certain situations. For example, under the assumption
that the map
$x \mapsto E\left\{A\left(\frac{1}{\hat e(W)} - \frac{1}{\hat
      g(W)}\right)\,\bigg|\,\, \hat m(W) = x\right\}$ is twice
differentiable, a kernel regression estimator with optimal bandwidth
guarantees the desired convergence rate $||\hat
q-q||=o_P(n^{-1/4})$. The HAL also achieves the desired rate under the
assumption that the map is c\`adl\`ag with bounded sectional variation
norm \citep{benkeser2016highly}.

The large sample distribution of the above TMLE is given in the
following theorem:

\begin{theorem}[Asymptotic Linearity of $\etmle$]\label{theo:dr}
  Assume \ref{ass:donsker}, \ref{ass:DR2}, and \ref{ass:DR3}
  hold. Define $g_m(w) = P(A=1\mid m(W) = m(w))$. Then
  \[\etmle - \theta=(\Pn - \PP)\mathrm{IF}_0 +
    o_P\big(n^{-1/2}\big),\]
  where
\begin{equation}
  \mathrm{IF}_0(O) = \frac{A}{g(W)}\left\{1-\frac{g_m(W) -
      g(W)}{\PP(A=1)}\right\}\{Y-m(W)\}+m(W)-\theta.\label{eq:if}
\end{equation}
\end{theorem}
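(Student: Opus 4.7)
The plan is the canonical TMLE double-decomposition argument. The bivariate tilting model~(\ref{eq:submodelY}) produces two score equations at $\hat\beta$: differentiating the empirical log-likelihood in $\beta_1$ gives $\Pn D_{\tilde\lambda,\etmle}=0$ with $\tilde\lambda = (\hat e, \tilde m)$, while differentiating in $\beta_2$ gives $\Pn S_{\hat h}^{\tilde m} = 0$ with $S_{\hat h}^{\tilde m}(O) = A\hat h(W)\{Y - \tilde m(W)\}$. Because $D_{\lambda,\theta}$ is linear in $\theta$, the first score equation rearranges to
\[
\etmle - \theta = (\Pn - \PP)\, D_{\tilde\lambda,\theta} + \beta(\tilde\lambda),
\]
with $\beta(\tilde\lambda) = \int \hat e^{-1}(g - \hat e)(m - \tilde m)\,\dd\PP$.

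First I would dispose of the empirical process term. Assumption~\ref{ass:DR2} forces $m_1 = m$, hence by the tower property $r_1 \equiv 0$ and the limit $e$-score is the constant $e_1 \equiv \PP(A=1)$. Combined with~\ref{ass:donsker} and the consistency $\tilde\lambda \to \lambda_1$, Theorem~5.9 of \cite{vanderVaart98} gives $(\Pn - \PP)\, D_{\tilde\lambda,\theta} = (\Pn - \PP)\, D_{\lambda_1,\theta} + o_P(n^{-1/2})$. Next I would handle the drift by applying Lemma~\ref{lemma:driftrep} (extended to the tilted $\tilde\lambda$) to obtain $\beta(\tilde\lambda) = \PP\, S_{\hat h}^{\tilde m} + o_P(n^{-1/2})$. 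Combining with $\Pn S_{\hat h}^{\tilde m} = 0$ and another Donsker-based replacement of $(\hat h,\tilde m)$ by its $L_2$ limit $(h^*, m)$,
\[
\beta(\tilde\lambda) = -(\Pn - \PP)\, S_{h^*}^m + o_P(n^{-1/2}),
\]
where $h^* = q^*/g$. A direct computation using the tower property and $\hat e \to \PP(A=1)$, $\hat g \to g$, $\hat m \to m$ yields $q^*(w) = g_m(w)/\PP(A=1) - 1$.

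Stacking the two pieces,
\[
\etmle - \theta = (\Pn - \PP)\{D_{\lambda_1,\theta} - S_{h^*}^m\} + o_P(n^{-1/2}),
\]
and a direct algebraic simplification of $D_{\lambda_1,\theta}(O) - S_{h^*}^m(O)$, using $e_1 \equiv \PP(A=1)$ and the closed form $h^*(w) = \{g_m(w) - \PP(A=1)\}/\{g(w)\PP(A=1)\}$, recovers the expression~(\ref{eq:if}) for $\mathrm{IF}_0$.

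The main obstacle is the extension of Lemma~\ref{lemma:driftrep} so that its conclusion applies to $\beta(\tilde\lambda)$: the TMLE score $\Pn S_{\hat h}^{\tilde m}=0$ mixes $\hat h$ (built from the untilted $\hat m$) with residuals from the tilted $\tilde m$, whereas the natural analog of the lemma would use $\tilde q/g$ with $\tilde q$ conditioning on $\tilde m$. Bridging this mismatch requires showing that the tilt $\tilde m - \hat m$, which is $O_P(|\hat\beta|)$ with $\hat\beta \to 0$ at an $o_P(n^{-1/4})$ rate under \ref{ass:DR2}, is small enough that $\tilde q - \hat q$ is $o_P(1)$ in $L_2$, and that, together with the Donsker property from~\ref{ass:donsker} and the product rate in~\ref{ass:DR3}, the resulting cross terms contribute only $o_P(n^{-1/2})$. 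The rest of the argument is bookkeeping.
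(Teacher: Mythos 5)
Your proposal follows essentially the same route as the paper's own proof: the same decomposition into $(\Pn-\PP)D_{\tilde\lambda,\theta}+\beta(\tilde\lambda)$, the same use of Lemma~\ref{lemma:driftrep} together with the second score equation $\Pn S_{\hat h}=0$ to turn the drift into $-(\Pn-\PP)S$ plus a remainder controlled by \ref{ass:DR2}--\ref{ass:DR3}, the same Donsker-based replacement of $(\hat h,\tilde m)$ by its limit, and the same identification $q^{*}(w)=g_m(w)/\PP(A=1)-1$ yielding $\mathrm{IF}_0$. The $\hat m$-versus-$\tilde m$ mismatch you flag when applying Lemma~\ref{lemma:driftrep} and the score equation to the tilted fit is a genuine subtlety, but the paper's proof glosses over it in exactly the same place (writing $\Pn K_{\hat h,\hat m}=0$ for what is really a score in $\tilde m$), so your treatment is, if anything, the more careful of the two.
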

The proof of the above theorem is presented in the Supplementary
Materials. Together with the central limit theorem,
Theorem~\ref{theo:dr} shows that
$n^{1/2}(\etmle - \theta)\to N(0, \setmle^2)$, where $\setmle^2 = \var\{\mathrm{IF}_0(O)\}$.
% \begin{equation}
%   \setmle^2 = \var\{\mathrm{IF}_0(O)\} =
% E\left[\frac{\sigma^2(W)}{g(W)}\left\{1-\frac{g_m(W) -
%         g(W)}{p}\right\}^2\right] +
%   E\{m(W)-\theta\}^2]\label{eq:evar}
% \end{equation}
The asymptotic distribution of Theorem~\ref{theo:dr} may be used to
construct hypothesis tests and a Wald-type confidence interval as
follows. In particular, the standard error may be estimated as
follows. Under Condition \ref{ass:DR1}, we have
$\hat e(W)\to \PP(A=1)$, and therefore
\[\hat q(w)\to E_{\PP}\left\{A\left(\frac{1}{\PP(A=1)} -
      \frac{1}{g(W)}\right)\,\bigg|\,\, m(W) = m(w)\right\}\] The law
of total expectation shows that the right hand side of the above
expression is equal to $g_m(w)/\PP(A=1) - 1.$ 

Thus, since $\hat e(w)$ is a consistent estimator of $\PP(A=1)$, and
$\hat q(w)$ is a consistent estimator of $g_m(w)/p - 1$,
$1/\hat e(W_i) - \hat q(W_i)/\hat g(W_i)$ is a
consistent estimator of $1-\{g_m(W) - g(W)\}/\PP(A=1)$, and
\[\hsetmle^2 = \frac{1}{n}\sum_{i=1}^n\left[A_i\left\{\frac{1}{\hat e(W_i)} -
      \frac{\hat q(W_i)}{\hat g(W_i)}\right\}\{Y_i - \tilde m(W_i)\} +
    \hat m(W_i) - \etmle\right]^2\] is a consistent estimator of
$\setmle^2$. Thus, the interval
$\etmle \pm z_{\alpha/2} n^{-1/2} \hsetmle$ has correct asymptotic
coverage $(1-\alpha)100\%$, whenever $\hat g$ and $\tilde m$ converge
to their true value at the rate stated in \ref{ass:DR2}.

Surprisingly, the above theorem does not require $\hat e$ to converge
to $e_1$ at any specific rate. On the other hand, the theorem does
require convergence of $\hat q$ at a rate that depends on a second
order term (assumption \ref{ass:DR3}). This sheds light on the role of
these two nuisance estimators: $\hat e$ is used to achieve double
robustness, whereas $\hat q$ is used to endow the estimator with a
valid asymptotic normal distribution.

\begin{remark}
  Inspection of equation (\ref{eq:if}) reveals the intuition behind
  the expected efficiency gains. If $\epsilon <g(W)\leq \PP(A=1)$, then by
  construction we have $g(W)\leq g_m(W)\leq \PP(A=1)$. Thus, large
  inverse probability weights $\{g(W)\}^{-1}$ get shrunk by a factor
  $0 \leq 1 - \{g_m(W) - g(W)\}/\PP(A=1) \leq 1$. When there are many
  large weights, this shrinkage has the effect of reducing the
  variance of the estimator $\etmle$ in comparison to the efficiency
  bound $\seff$ defined in (\ref{eq:effb}). To illustrate this
  further, consider an extreme scenario where $m(W)$ is independent
  of $A$ such that $g_m(W) = \PP(A=1)$. Then $\setmle^2$ reduces to
  $E[\PP(A=1)^{-2}\sigma^2(W)g(W) + \{m(W) - \theta\}^2]$. If the
  correlation between $\sigma^2(W)$ and $g(W)$ is small enough, then
  it can be expected that $\setmle^2 < \seff$. The stabilization of
  large probability weights comes at the price of larger weights for
  observations $W$ with large probabilities $g(W)> \PP(A=1)$. However, the
  weight augmentation factor is bounded by 2. In pathological cases
  where the correlation between $\sigma^2(W)$ and $g(W)$ is large,
  so that $\sigma^2(W)/g(W)$ is nearly constant, then it is possible
  that $\setmle^2 > \seff$. Equality of $\mathrm{IF}_0$ with the
  efficient influence function $D_{\eta,\theta}$ is obtained
  trivially when $g_m(w) = g(w)$, in which case $\etmle$ and
  $\tmle$ are asymptotically equivalent.
\end{remark}

\begin{remark}
  Our estimator is related to a recent proposal by
  \cite{benkeser2019nonparametric} which consists of replacing the
  propensity score by $g_m(w)$. Their proposed estimator
  requires consistent estimation of the outcome regression and is
  always super-efficient. Unlike their estimator, $\etmle$ is doubly
  robust and not uniformly super-efficient.
\end{remark}

\section{Simulation Studies}\label{sec:simula}

In this section we present a simulation study using synthetic data
with the aim of illustrating the properties of the proposed
estimators, in comparison with $\ctmle$, $\tmle$, $\aipw$, and the
G-computation estimator $\gcomp$. For each sample size 200, 800, 1800,
3200, 5000, 7200, 9800, 12800, we generate $1000$ datasets as
follows. First, a set of variables $\{Z_1,\ldots,Z_{15}\}$ is
generated, where all $Z_i$'s are independently distributed
$2\mathrm{Beta}(1/3,1/3) - 1$. Then, a set of covariates
$\{W_1,\ldots W_{15}\}$ is generated as $W_j=Z_j$ for odd $j$ and
$W_j = Z_{j-1}Z_j$ for even $j$. Then, a variable $A$ is drawn from a
Bernoulli distribution with probabilities
$g(W) = \expit\{\delta \sum_{j=1}^{10}W_j\}$, for
$\delta\in\{0,1\}$. The case $\delta = 0$ is a randomized trial and
represents a best-case scenario for the variability of the propensity
score. Figure~\ref{fig:gdens} shows the high variability of the
propensity score for $\delta = 1$. The outcome is generated as
$Y = A + \sum_{j=6}^{15} W_j + N(0,1)$. We aim to estimate the causal
effect of $A$ on $Y$, defined as $E(Y_1-Y_0)$. The efficiency bounds
for this parameter are approximately $6.8$ and $16.1$ for $\delta =0$
and $\delta = 1$, respectively. Note that only $W_6,\ldots,W_{10}$ are
confounders of the causal effect of $A$ on $Y$. Note also that the
causal effect of $A$ on $Y$ is $\theta=1$.

\begin{figure}[!htb]
  \begin{center}
    \includegraphics[scale = 0.3]{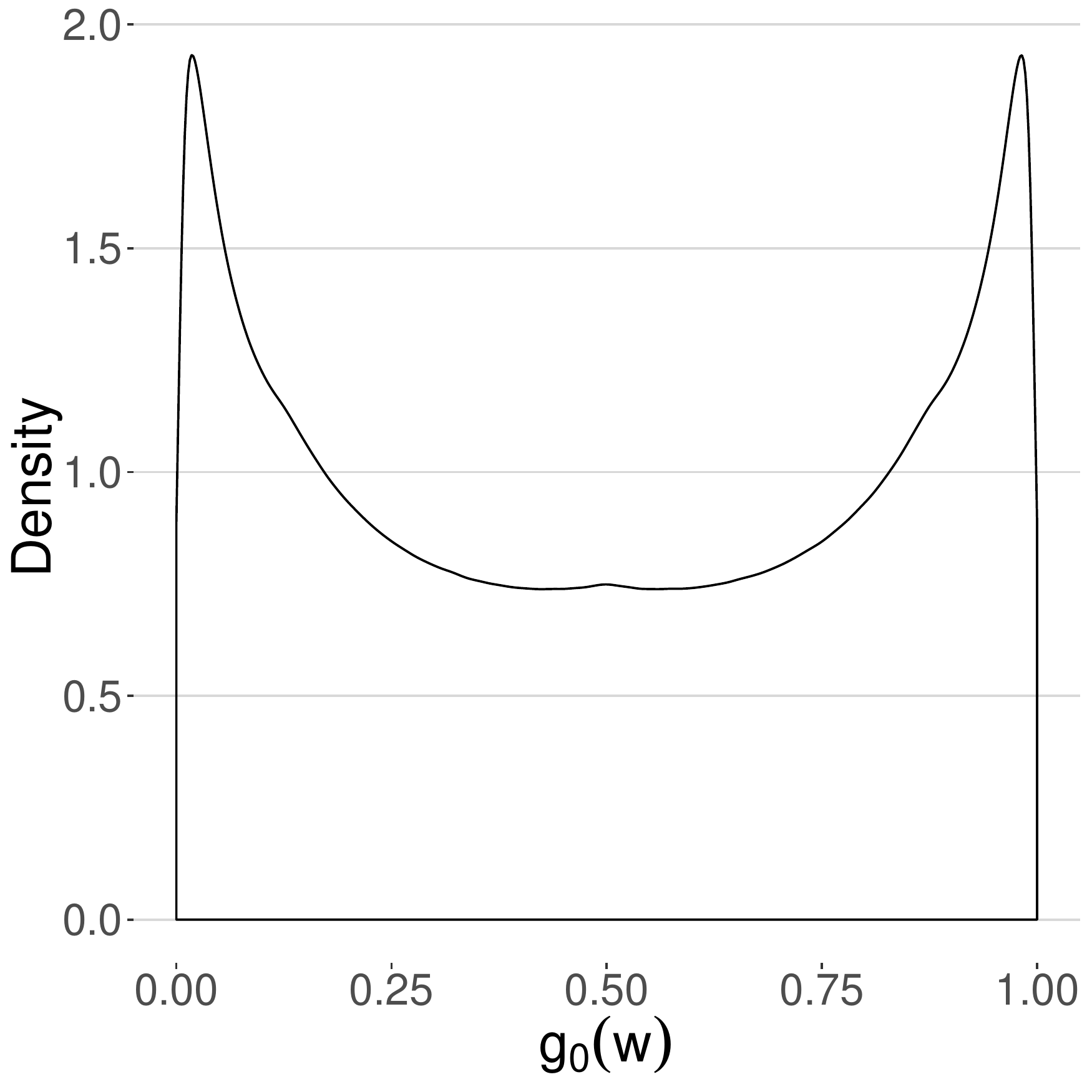}
  \end{center}
  \caption{Probability density function of the propensity score
    $g(W)$ in the simulation study.}
  \label{fig:gdens}
\end{figure}

For each generated dataset, we fit four different scenarios of
consistent estimation of $g$ and $m$: (A) both consistently
estimated, (B) only $m$ consistently estimated, (C) only $g$
consistently estimated, and (D) both inconsistently estimated. All
models consisted of main terms generalized linear regression models
with the appropriate link functions (identity for the outcome,
logistic for the propensity score). For $\hat g$, we fit a logistic
regression model. For $\hat m$, we fit a generalized additive model
that includes $\hat g(W)$ as a covariate. Consistent estimators were
constructed using covariates $W_j$; inconsistent estimators used
covariates $Z_j$. For each of the above scenarios, we computed the
four estimators: the $G$-computation or regression adjusted estimator,
$\etmle$, $\ctmle$, $\tmle$. We compare the performance of the
estimators in terms of four metrics:
\begin{itemize}
\item Absolute bias: $| E(\hat\theta - \theta)|$
\item Absolute bias scaled by $n^{1/2}$: $n^{1/2}| E(\hat\theta
  - \theta)|$
\item Standard deviation scaled by $n^{1/2}$:
  $n^{1/2} \text{sd}(\hat\theta)$
\item Root mean squared error scaled by $n^{1/2}$:
  $\{n E(\hat\theta - \theta)^2\}^{1/2}$.
\item The quotient $\hat \sigma^2 / \var(\hat\theta)$. For $\ctmle$ and
  $\tmle$, the variance was estimated using the variance of the
  efficient influence function. For $\etmle$, the variance was
  estimated using the doubly robust asymptotic distribution given in
  Theorem~\ref{theo:dr}. The $G$-computation estimator is not included
  in this comparison.
\item Coverage probability of a Wald-type confidence interval.
\end{itemize}
All the above quantities were approximated using Monte-Carlo integrals
across the 1000 generated datasets. The results for $\delta=1$ are
presented in Figures~\ref{fig:simula1} and~\ref{fig:simula2}. The
results for $\delta=0$, presented in the Supplementary Materials,
corroborate that all estimators have nearly identical performance
except in small samples.

\begin{figure}[!htb]
  \centering
  \includegraphics[width=\linewidth]{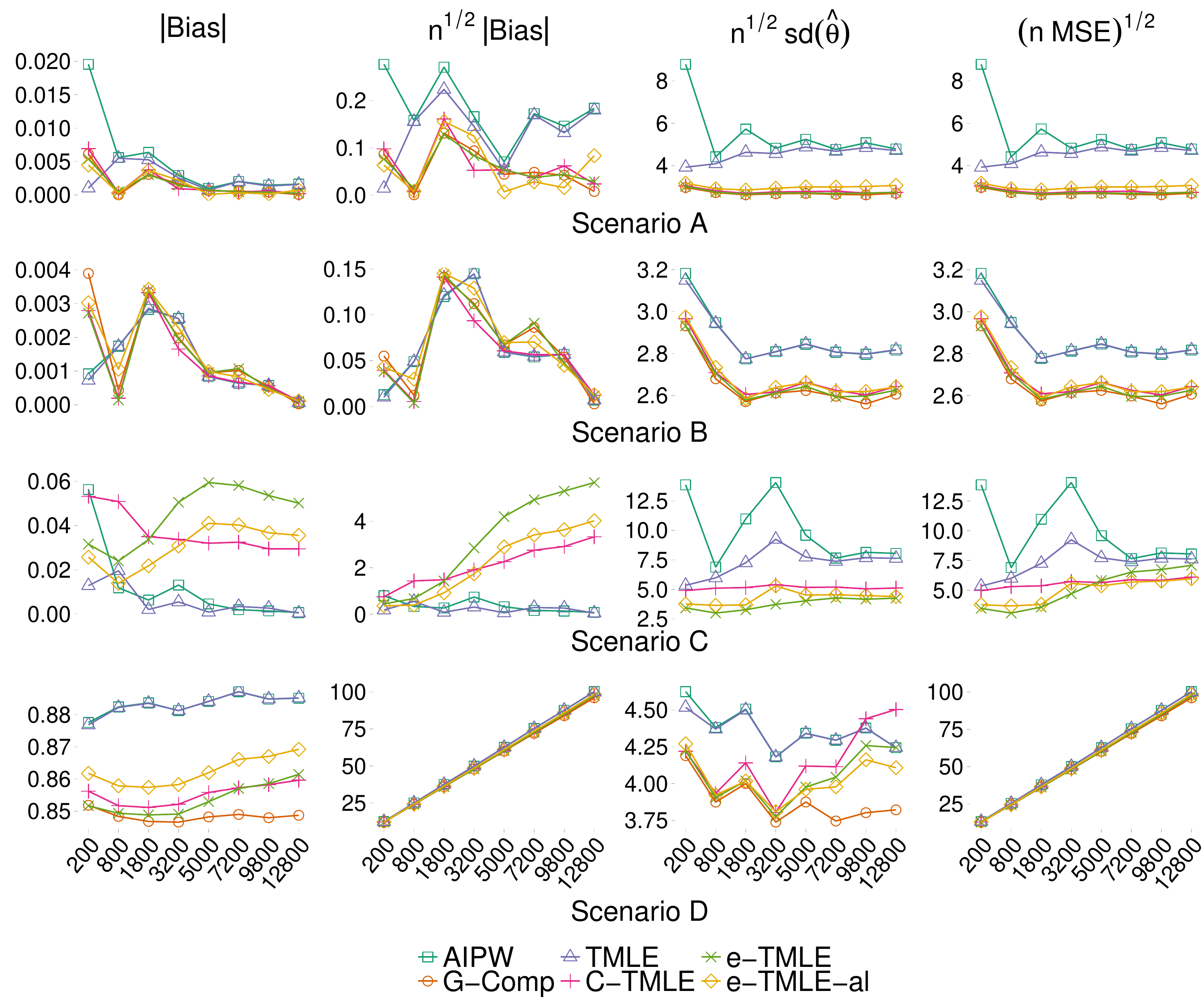}
  \caption{Simulation Results: bias, variance, and mean squared error
    for $\delta = 1$.}
  \label{fig:simula1}
\end{figure}

\begin{figure}[!htb]
  \centering
  \includegraphics[width=0.8\linewidth]{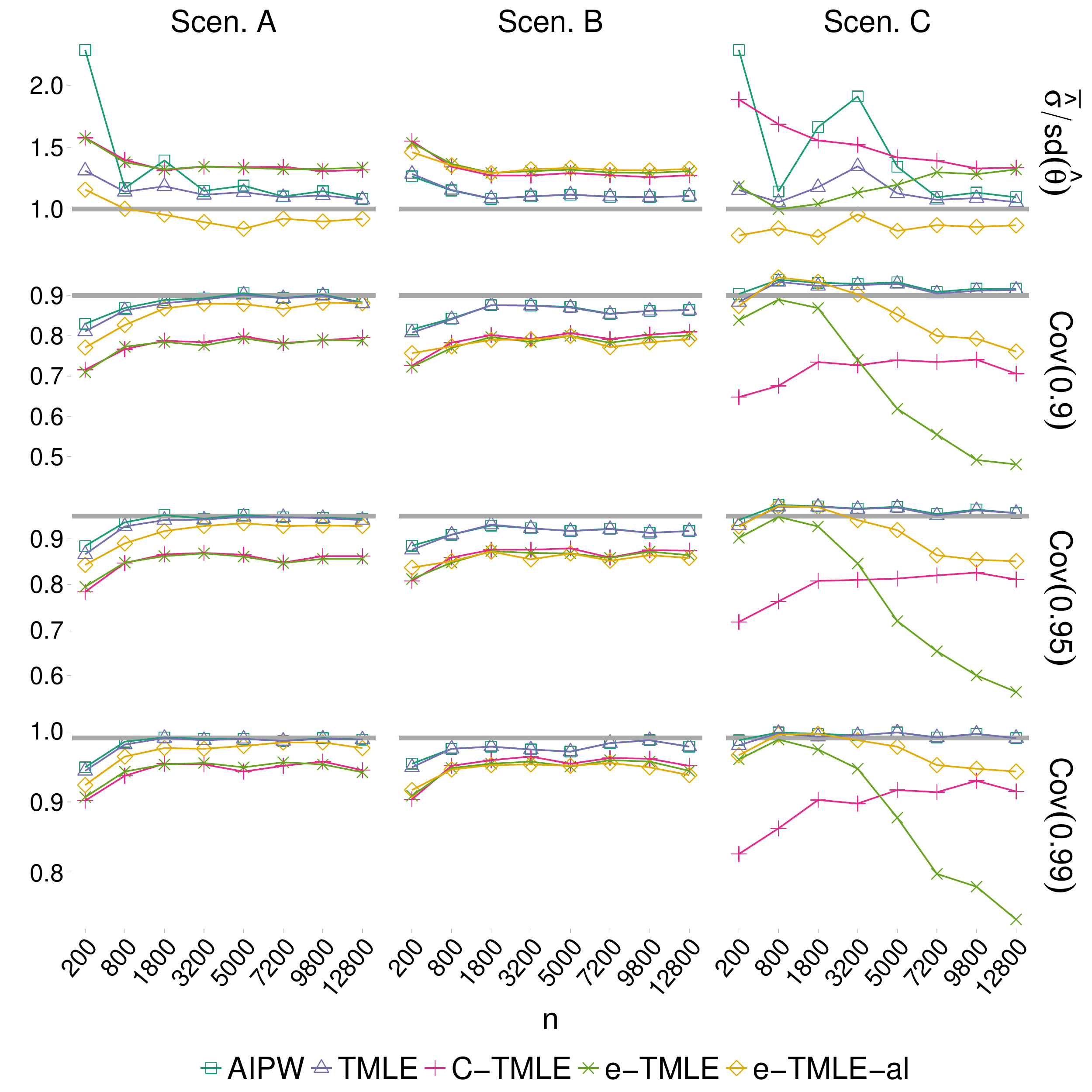}
  \caption{Simulation Results: variance estimation and coverage
    probabilities for $\delta = 1$.}
  \label{fig:simula2}
\end{figure}

\paragraph{Results for scenario A.}
The TMLE has smaller bias than all competitors in small samples
($n=200$). The AIPW and TMLE have similar asymptotic performance, with
the TMLE having much better small sample performance. This improvement
has been demonstrated in several simulation studies
\citep[e.g.,][]{Porter2011}. The variance of the TMLE is much larger
than the variance of its competitors (except AIPW), making its overall
performance on mean squared error worse. The $g$-computation estimator
and the C-TMLE have similar performance, with the $e$-TMLE having
comparable performance. Overall, the asymptotic efficiency gains
obtained with the C-TMLE and $e$-TMLE are noticeable, their MSE is
similar to that of the $g$-computation estimator, and much smaller
than that of the efficient estimator TMLE. In particular, it seems
that the $n^{1/2}$-bias of the TMLE dos not converge quickly enough,
perhaps as a result of the large variability of the inverse
probability weights. This problem is solved by the collaborative
double robustness involved in $e$-TMLE and C-TMLE, which are capable
of detecting that the outcome models are correctly specified, and
therefore do not adjust for the full covariate vector $W$ in the
propensity score. As predicted by Theorem~\ref{theo:dr}, the
confidence interval based on $e$-TMLE has asymptotically correct
coverage. This is also the case for the TMLE. However, the variance
for the C-TMLE based on the efficient influence function, which is the
default of the R package \texttt{ctmle} used in our simulations is
inconsistent and generates important undercoverage of the confidence
intervals. This is consistent with the simulation results reported in
\cite{ju2018collaborative}.

\paragraph{Results for scenario B.}  All estimators have similar
performance in terms of bias, with the TMLE having slightly smaller
bias at small sample sizes. The MSE of the C-TMLE and $e$-TMLE is
smaller at all sample sizes, but the difference is not as large as it
is for scenario (A). The MSE of all estimators is smaller than in
scenario (A), this is a consequence of the misspecification of the
propensity score model, which reduces the variability of the
estimator. The $e$-TMLE seems to be $n^{1/2}$-consistent in this
scenario, which is not predicted by our theory. According to our
asymptotic analysis in Section~\ref{sec:proposal},
$n^{1/2}$-consistency of $e$-TMLE requires that
$\beta(\tilde\lambda) = O_P(n^{-1/2})$, with $\beta(\tilde\lambda)$
defined in (\ref{eq:defbeta}). If $\hat m$ is the MLE in a correctly
specified parametric model for $m$, as in this simulation, then
$\beta(\tilde\lambda) = O_P(n^{-1/2})$ is expected. However,
$\beta(\tilde\lambda) = O_P(n^{-1/2})$ should not be expected in
general, for example for data-adaptive estimators $\hat m$. In this
scenario all confidence intervals have coverage probabilities below
the nominal level.

\paragraph{Results for scenario C.} In this scenario all estimators
had larger variance, compared to scenarios A and B. This is due to the
high variability of the propensity score weights. The $G$-computation
estimator had smaller bias than the C-TMLE and the $e$-TMLE, but this
is an artifact of our data generating mechanism and preliminary
estimator $\hat m$. Although the TMLE has smaller bias at all sample
sizes, the C-TMLE and $e$-TMLE have a better bias-variance trade-off
than the plan TMLE. In addition, the only estimator that seems to be
$n^{1/2}$-consistent is the TMLE. For the $e$-TMLE, this is a result
of (\ref{eq:defbeta}), which shows that in this case
$\beta(\tilde\lambda)$ is not $O_P(n^{-1/2})$. In this scenario the
confidence intervals for the $e$-TMLE and C-TMLE have coverage
probabilities below the nominal level. The interval based on the TMLE
has a coverage probability close to the nominal level. Under
consistent estimation of the propensity score, efficient estimation
theory predicts that this interval has conservative coverage. This is
corroborated in our simulations. The C-TMLE has better asymptotic
coverage than the $e$-TMLE-al in both cases $\delta=1$ and
$\delta=0$. We conjecture this is a consequence of the sparsity of our
data generating mechanisms, for which the C-TMLE is specially
designed.

\paragraph{Results for scenario D.} All estimators have similar bias
that does not disappear at $n^{1/2}$ rate, as predicted by theory. All
confidence intervals in this scenario have poor performance, and
are not shown in Figure~\ref{fig:simula2}.

\section{Illustrative Application}\label{sec:applica}

To illustrate our methods in a real dataset, we revisit the example
presented in \cite{kupzyk2017advanced,beal2014introduction}. The
dataset for this study is part of a longitudinal study of adolescent
girls originally conducted by \cite{dorn2008association}. We reanalyze
the data with the objective of assessing whether smoking among
adolescent girls negatively affects bone health via depletion in bone
mineral content (BMC). The main hypothesis we test is whether smoking
causes lower levels of accrual in BMC. Data on 259 adolescent girls
was collected each year for three years, and includes information on
smoking status, age, race, BMI, SES, age at menarche, Tanner breast
stage, birth control, calcium intake, PAQ-C physical activity score,
state anxiety T score, and trait anxiety T score. Bone mineral content
of the hip, spine, and total body was determined by dual-energy x-ray
absorptiometry. The challenge for causal inference is that smoking
behavior may be influenced by variety of reasons such as increased
depression and physical activity, and those factors may also affect
BMC (e.g., depression decreases BMC accrual, earlier onset of puberty
have higher BMC accrual levels, etc.) In this article, we will
estimate the effect of smoking on the first year of study on BMC
measured on the third year of the study. By the third year, 59 girls
were lost to follow up, so their outcome is missing. Denote
$T\in\{0,1\}$ an indicator of smoking status, and $M\in\{0,1\}$ an
indicator of not lost to follow-up. We will estimate the average
treatment effect by comparing the mean BMC in counterfactual worlds in
which $P(T=1,M=1)=1$ and $P(T=0,M=1)=1$, respectively. Denote with $W$
the confounders listed above. The propensity scores that we must
estimate are given by $P(T=t, M=1\mid W):t\in\{0,1\}$. We estimate
these probabilities by independently estimating $P(T=t\mid W)$ and
$P(M=1\mid T=t,W)$, and using the Bayes rule. In order to perform
model selection for the propensity score models and the outcome
regression, we use 5-fold cross-validation as implemented in the R
package SuperLearner \citep{SL}. We compared the risk (negative
log-likelihood for binary outcomes, MSE for continuous outcomes) of 9
different candidate prediction methods. The cross-validation results
are presented in Table~\ref{tab:psmodels}. We then computed the
estimators for the $e$-score separately for the two groups. The
original propensity score, together with the $e$-score, are presented
in Figure~\ref{fig:gaplica}. This figure illustrates the reduction in
variability of the propensity score achieved with the $e$-score. For
reference, the variance of the inverse propensity score for the
treated and untreated groups are 154.7 and 122.9, respectively. The
corresponding quantities for the $e$-score are $0.096$ and $0.0005$.

\begin{table}[!htb]
  \centering
  \begin{tabular}{r|rr|rr}
    \hline
    & \multicolumn{2}{c|}{Propensity score} &
                                             \multicolumn{2}{c}{Outcome
                                             regression} \\
    & Model for $T$ & Model for $M$ &
                                      Exposed
    & Controls \\
    \hline
    GLM                  & 0.4695 & 0.4868 & 3912.9 & 8676.7 \\
    GLM-$\ell_1$         & 0.4332 & 0.4783 & 3893.6 & 8047.9 \\
    GLM-$\ell_1\times 2$ & 0.4413 & 0.4850 & 3909.0 & 7542.8 \\
    GLM-$\ell_1\times 3$ & 0.4388 & 0.4820 & 3940.3 & 8046.9 \\
    Bayes GLM            & 0.4568 & 0.4810 & 3911.6 & 8681.6 \\
    MARS                & 0.6548 & 0.8024 & 5126.1 & 6852.7 \\
    GAM                 & 0.4747 & 0.4977 & 3956.7 & 7509.4 \\
    Boosted GLM          & 0.4334 & 0.4838 & 4021.7 & 8672.0 \\
    Boosted GAM         & 0.4377 & 0.4832 & 3868.9 & 9258.4 \\
    \hline
  \end{tabular}
  \caption{Cross-validated risk (negative log-likelihood for binary
    outcomes, MSE for continuous outcomes) for each
    estimator. GLM denotes generalized linear models with canonical link, GLM-$\ell_1$ denotes
    GLM with $\ell_1$ regularization, GLM-$\ell_1\times
    k$ denotes GLM with $k$-way interactions and
    $\ell_1$ regularization, Bayes GLM is Bayesian GLM
    with non-informative priors, MARS is multivariate adaptive splines, and GAM is
    generalized additive model. Tuning parameters were chosen using
    cross-validation with the caret R package \citep{Caret}.}
\label{tab:psmodels}
\end{table}

\begin{figure}[!htb]
  \begin{center}
    \includegraphics[scale = 0.2]{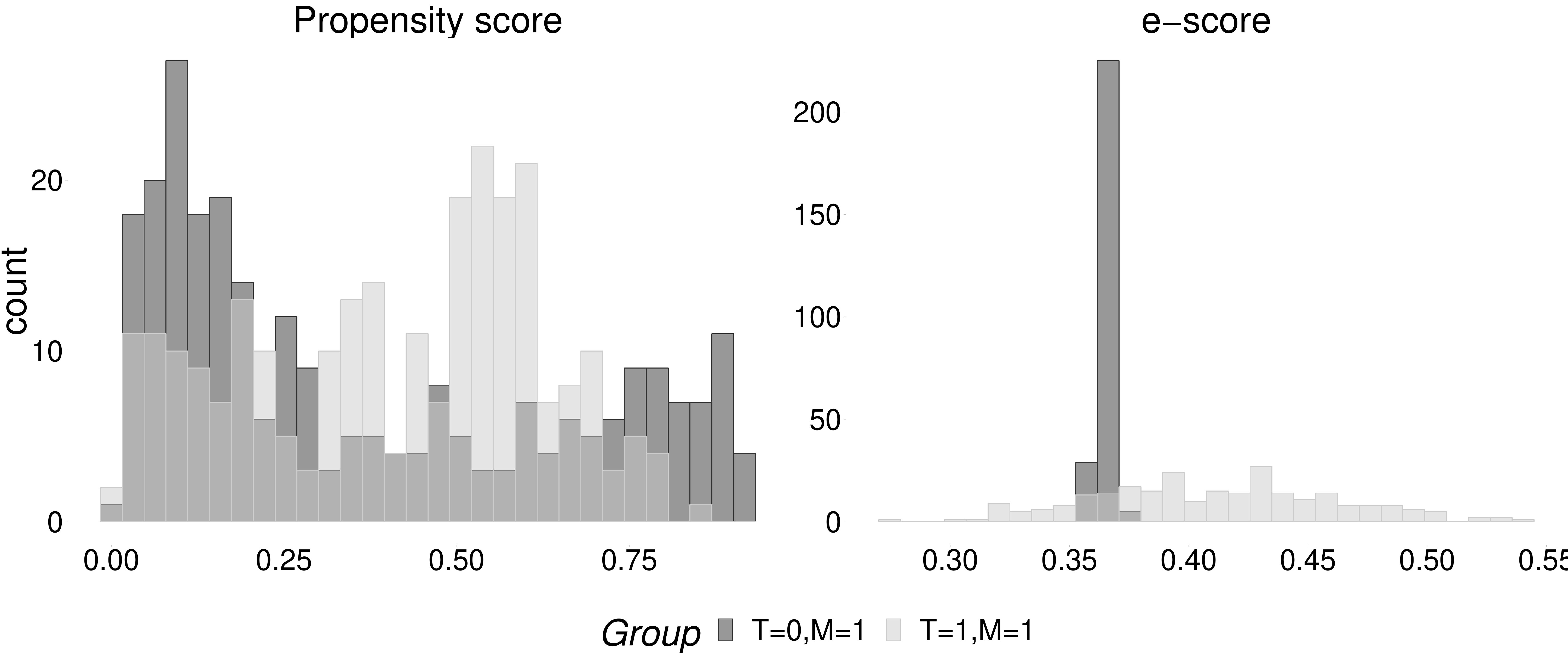}
  \end{center}
  \caption{Distributions of the estimated propensity and $e$- scores in the
    smoking--BMC example.}
  \label{fig:gaplica}
\end{figure}

We then proceeded to compute the estimators studied in
Section~\ref{sec:simula}. The C-TMLE was not computed because current
methodology and software does not allow for missingness in the
outcome. The results are presented in Table~\ref{tab:aplica}. The
point estimates are somewhat similar for all estimators, and show a
statistically significant reduction of around 40--45 grams of BMC due
to smoking. The standard error for the $e$-TMLE is not reported as we
do not have methodology to construct a valid estimate. As expected,
the AIPW and TMLE have very similar standard errors. The standard
error for the $e$-TMLE-AL is substantially different from the other
two estimates, yielding efficiency gains of approximately 54\%
compared to the TMLE. This means that a pre-specified analysis plan
using the $e$-TMLE-AL instead of the TMLE would have required roughly
$35\%\approx 1 - (1/1.54)$ fewer patients (90 out of 259) to achieve
the same power.

\begin{table}[ht]
  \centering
  \begin{tabular}{lccc}
    \hline
    Estimator & Estimate & S.E.    &  P-value   \\\hline
    AIPW      & -45.7    & 14.81 &  0.0020 \\
    TMLE      & -45.5    & 14.52 &  0.0017 \\
    $e$-TMLE   & -39.6    & --- &  --- \\
    $e$-TMLE-AL   & -43.3    & 11.67 &  0.0002 \\
    \hline
  \end{tabular}
  \caption{Estimates of the average treatment effect in the
    smoking--BMC example.}
  \label{tab:aplica}
\end{table}
\section{Discussion}\label{sec:discussion}

We have discussed several results for collaborative doubly robust
estimation of causal parameters. Our main contribution is the
introduction of the $e$-score, which greatly facilitates the
construction of collaborative doubly robust estimators compared to
existing methodologies that rely on model selection for the propensity
score or sparsity assumptions. Furthermore, we expect the introduction
of the $e$-score will facilitate the generalization of the methods to
more complex data structures, such as missing outcomes and
longitudinal studies.

A key component of the estimators that we propose is the estimation of
certain univariate regression functions using kernel regression. While
we propose to estimate the bandwidth of those estimators using the
optimal regression bandwidth, this choice may be suboptimal for
estimation of causal effects. Choosing the bandwidth in a way that
optimizes the MSE of the causal effect estimator will be the subject
of future research.

We have proved that one of our estimators, the $e$-TMLE-AL, is
asymptotically linear in the non-parametric model with a variance that
may be smaller than the variance of the efficient influence
function. Semiparametric efficiency theory \citep{Bickel97} dictates
that the variance of the efficient influence function is the smallest
possible variance attained by any regular estimator. Therefore, it
must be that the $e$-TMLE-AL is an irregular estimator, at least in
some cases. The implications of this irregularity are still unclear to
us. However, we hope the following argument may convince the reader
that the type of irregularity of this estimator is not very
important. Consider a non-parametric model $\cal M$ for the data
structure $(W_I,W_C,A,Y)$, where $W_I$ is an instrumental variable
(that is, a cause of $A$ but otherwise unrelated to $Y$), and $W_C$
are true confounders. Consider the efficiency bound $\tau^2$ for
estimation of $E(Y_1)$ in such a model. An efficient AIPW or TMLE
based on a data reduction given by $(W_C,A,Y)$ may have smaller
variance than the efficiency bound $\tau^2$, and therefore be
irregular in $\cal M$. However, this type of irregularity would hardly
seem problematic. Our conjecture is that our estimators rely on a
dimension reduction which has an effect similar to removing
instrumental variables, and therefore is irregular in the sense of the
above example. However, irregularity in this context remains poorly
understood, and rigorous mathematical methods need to be developed
before a rigorous study can be carried out.

Lastly, we conjecture that the $e$-score may have other important uses
in addition to the estimators discussed in this manuscript. One
particular use is as a tool for outcome model selection. According to
our Theorem~\ref{theo:cdr2}, an $e$-score with variance zero would
mean that the outcome regression, though possibly misspecified,
provides a $g$-computation consistent estimator of the causal
parameter of interest. A rigorous development of such methodology will
be the subject of future research.
% Note that the previous theorem uses the true $e_1$. This does
% not introduce any challenge consistent estimation is
% straightforward.

\appendix 
\section{Supplementary Materials}
\subsection{Theorem~\ref{theo:cdr2}}
\begin{proof}
  % Note that
  % \[D_{\lambda_1}(O) =
  %   D_{{\footnotesize \mbox{ipw}},\lambda_1}(O)+D_{{\footnotesize
  %   \mbox{car}},\lambda_1}(O),\]
  % where
  % \begin{align*}
      %       D_{{\footnotesize \mbox{ipw}},\lambda_1}(O) & = \frac{A}{e_1(W)}Y
                                                            %                                                             -\theta\\
      %       D_{{\footnotesize
      %       \mbox{car}},\lambda_1}(O) &=-\frac{m_1(W)}{e_1(W)}(A-e_1(W)).
                                          %     \end{align*}
  For $\lambda=(m,e_1)$, we have $e_1D_{\lambda}=0$, and thus
  \begin{align*}
    e_1D_{\lambda_1}(O)&=e_1\{D_{\lambda_1}(O)-D_{\lambda}(O)\}\\
                       &=\int\frac{m(w) - m_1(w)}{e_1(w)}\{a-e_1(w)\}\dd \PP(a,w)
  \end{align*}
  The result for $m_1=m$ is trivially obtained from the above
  equation. The result for $g_1=g$ follows from the following
  argument.

  \begin{align}
    E_{\PP}D_{\lambda_1}(O)&=\int\frac{m(w) -
                             m_1(w)}{e_1(w)}\{a-e_1(w)\}\dd \PP(a,w)\notag\\
                           &=\int\frac{m(w) -
                             m_1(w)}{e_1(w)}\{g(w)-e_1(w)\}\dd \PP(w)\notag\\
                           &=\int\frac{a}{g(w)}\frac{y -
                             m_1(w)}{e_1(w)}\{g(w)-e_1(w)\}\dd \PP(y,a,w,g(w))\notag\\
                           &=\int\left[\int\frac{a}{g(w)}\frac{y -
                             m_1(w)}{e_1(w)}\{g(w)-e_1(w)\}\dd \PP(y,a,w\mid
                             g(w))\right]\dd\PP(g(w))\notag\\
                           &=\int\left[\int \frac{a}{g(w)}\{y -
                             m_1(w)\}\dd \PP(y,a,w\mid
                             g(w))\right]\frac{g(w)-e_1(w)}{e_1(w)}\dd\PP(g(w))\label{eq:p1}\\
                           &=\int\frac{r_1(w)}{e_1(w)}\{g(w)-e_1(w)\}\dd \PP(g(w))\label{eq:p2}\\
                           &=\int\frac{r_1(w)}{e_1(w)}\{g(w)-e_1(w)\}\dd \PP(w)\notag\\
                           &=\int\frac{r_1(w)}{e_1(w)}\{a-e_1(w)\}\dd \PP(a,w)\notag\\
                           &=\int\left[\int\frac{r_1(w)}{e_1(w)}\{a-e_1(w)\}\dd \PP(a,w,\mid
                             r_1(w))\right]\dd \PP(r_1(w))\notag\\
                           &=\int\frac{r_1(w)}{e_1(w)}\{e_1(w)-e_1(w)\}\dd \PP(r_1(w))\notag\\
                           &=0\notag,
  \end{align}
  where (\ref{eq:p1}) follows because $e_1(w)$ is a function of $w$ only
  through $g(w)$, and (\ref{eq:p2}) follows from the definition of
  $r_1(w)$.
\end{proof}

% \subsection{Lemma~\ref{lemma:varsel}}
% \begin{proof}
%   Statement \ref{lemma:varsel1} follows from the fact that
%   \begin{align*}
      %       r_1(w) &= E[Y-m(w)\mid A=1, g_1(W)=g_1(w)]\\
      %                      &= E[m(w)-m(w)\mid
                               %                                A=1, g_1(W)=g_1(w)]\\
      %                      &=0,
                               %     \end{align*}
                               %                                where  the  second  equality  follows   from  the  law  of  iterated
                               %                                expectation applied to conditioning on  $W$. The proof for statement
                               %                                \ref{lemma:varsel2}  proceeds  as  follows.   First,  we  show  that
                               %                                $p_0\{W_C\mid A=1, g(W_I, W_C)\}=p_0\{W_C\mid A=1, g(W_C)\}$:
                               %                                \begin{align}
                               %                              &\frac{p_0\{W_C,W_P\mid A=1, g(W_I, W_C)\}}{p_0\{W_C,W_P\mid A=1, g(W_C)\}}=\\
      %       =& \frac{p_0\{W_C,W_P, A=1, g(W_I, W_C)\}}{p_0\{W_C,W_P, A=1,
                 %                  g(W_C)\}}\frac{p_0\{A=1, g(W_C)\}}{p_0\{A=1,
                 %                  g(W_I,W_C)\}}\\
      %       =& \frac{p_0\{A=1\mid W_C,W_P, g(W_I, W_C)\}}{p_0\{A=1\mid W_C,W_P,
                 %                  g(W_C)\}}\frac{p_0\{W_C,W_P, g(W_I,W_C)\}}{p_0\{W_C,W_P,
                 %                  g(W_C)\}}\times\\
      %                      &\frac{p_0\{A=1\mid g(W_C)\}}{p_0\{A=1\mid
                               %                                g(W_I,W_C)\}}\frac{p_0\{g(W_C)\}}{p_0\{g(W_I,W_C)\}}\\
      %       =& \frac{g(W_I, W_C)}{g(W_C)}\frac{p_0\{W_C,W_P\}}{p_0\{W_C,W_P\}}\frac{g(W_C)}{g(W_I,W_C)}\frac{p_0\{g(W_C)\}}{p_0\{g(W_I,W_C)\}}
                 %     \end{align}
                 %                  \end{proof}

\subsection{Lemma~\ref{lemma:driftrep}}
\begin{proof}
  Define
  \begin{equation*}
    q^0(w) = e_1\left\{A\left(\frac{1}{\hat e(W)} -
        \frac{1}{\hat g(W)}\right)\,\bigg|\,\, m(W) = m(w)\right\},
                 %                  b(w) = e_1\left\{A\left(e_1^{-1}(W) - g^{-1}(W)\right)\mid
                 %                  m(W) = m(w)\right\}.
  \end{equation*}
  where, as in $q$, the expectation is taken with respect to the
  distribution of $(A,W)$, taking $\hat e$ and $\hat g$ as fixed
  functions.  The proof proceeds as follows:
  \begin{align}
    \beta(\hat\lambda) =& \int \frac{1}{\hat e(w)}\{g(w) - \hat e(w)\}\{m(w) - \hat
                          m(w)\}\dd \PP(w)\notag\\
    =& \int
       g(w)\left\{\frac{1}{\hat e(w)} - \frac{1}{\hat g(w)}\right\}\{m(w)-\hat
       m(w)\}\dd \PP(w) \notag\\
                        &+ \int\frac{1}{\hat g(w)}\{g(w) - \hat g(w)\}\{m(w)-\hat m(w)\}\dd \PP(w)\notag\\
    =& \int
       g(w)\left\{\frac{1}{\hat e(w)} - \frac{1}{\hat g(w)}\right\}\{m(w)-\hat
       m(w)\}\dd \PP(w) + o_P(n^{-1/2})\label{sorder}\\
    =& \int
       a \left\{\frac{1}{\hat e(w)} - \frac{1}{\hat g(w)}\right\}\{m(w)-\hat
       m(w)\}\dd \PP(a, w) + o_P(n^{-1/2})\notag\\
    =& \int q^0(w) m(w) \dd \PP(w) - \int q(w) \hat m(w) \dd
       \PP(w)+ o_P(n^{-1/2})\label{defq}\\
    =& \int q(w) \{m(w) - \hat m(w)\} \dd \PP(w) - \int
       \{q(w) - q^0(w)\}m(w) \dd \PP(w)+ o_P(n^{-1/2})\notag\\
    =& \int q(w) \{m(w) - \hat m(w)\} \dd \PP(w) + o_P(n^{-1/2})\label{itexp}\\
    =& \int a \frac{q(w)}{g(w)} \{m(w) - \hat m(w)\} \dd
       \PP(a,w) + o_P(n^{-1/2})\notag\\
    =& \int a \frac{q(w)}{g(w)} \{y - \hat m(w)\} \dd \PP(y,a,w) + o_P(n^{-1/2})\notag,
  \end{align}
  where (\ref{sorder}) follows from Condition \ref{ass:DR2} in the
  main document of the manuscript, (\ref{defq}) follows from the
  definitions of $q(w)$ and $q^0(w)$, and (\ref{itexp}) follows
  from applying the law of iterated expectation to show that
  \[\int \{q(w) - q^0(w)\}m(w) \dd \PP(w)=0.\]
\end{proof}

\subsection{Theorem~\ref{theo:dr}}

Arguing as in equation (\ref{eq:wh}) of the main document we get
\begin{equation*}
  \etmle-\theta  = \beta(\tilde \lambda) +
  (\Pn - \PP)D_{\lambda, \theta} + o_P\big(n^{-1/2} +
  |\beta(\tilde\lambda)|\big)
\end{equation*}
Lemma~\ref{lemma:asbeta} below gives the asymptotic expression for
$\beta(\tilde \eta)$. Substituting this expression we get
\[\tmle-\theta = (\Pn - \PP)(D_{\lambda, \theta} - S(O)) +
  o_P\big(n^{-1/2} + O_P(n^{-1/2})\big),\]
where
\[S(O) = \frac{A}{\PP(A=1)}\frac{g_m(w) - \PP(A=1)}{g(W)}\{Y- m(W)\}.\]
This, together with the central
limit theorem concludes the proof.

\setcounter{lemma}{1}
\begin{lemma}[Asymptotic Linearity of $\beta(\tilde\lambda)$]\label{lemma:asbeta}
  In a slight abuse of notation, for a function $b:w\mapsto \mathbb{R}$, denote
  $K_{b,m}(O) = Ab(W)\{Y-m(W)\}$. Assume \ref{ass:donsker} and
  \ref{ass:DR2}. Then
  \begin{equation*}
    \beta(\tilde\lambda)= -(\Pn - \PP)K_{b, m}(O) + o_P(n^{-1/2}),
  \end{equation*}
  where \[b(w) = \frac{g_m(w) - \PP(A=1)}{\PP(A=1)g(W)}.\]
\end{lemma}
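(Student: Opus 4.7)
The plan is to chain three ingredients: a drift-to-score representation for $\beta(\tilde\lambda)$ obtained by mimicking Lemma~\ref{lemma:driftrep} with $\tilde m$ in place of $\hat m$; the empirical score equation solved by $\tilde m$ in Step~2 of the algorithm; and an empirical process equicontinuity argument under the Donsker condition~\ref{ass:donsker}.

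For the first step, I would repeat the seven-line derivation in the proof of Lemma~\ref{lemma:driftrep}, substituting $\tilde m$ for $\hat m$ throughout. The modifications needed are: (a) the standard TMLE observation that the logistic tilting contributes only an $O_P(n^{-1/2})$ correction, so $\|\tilde m - m\|$ inherits the rate of $\|\hat m - m\|$, and the $\hat g^{-1}(g-\hat g)(m-\tilde m)$ residual remains $o_P(n^{-1/2})$ by Cauchy--Schwarz and~\ref{ass:DR2}; and (b) a second pass replacing the population quantity $q/g$ by the estimator $\hat h=\hat q/\hat g$, which is justified by applying Cauchy--Schwarz to the products $\|\hat q - q\|\|\tilde m - m\|$ and $\|\hat g - g\|\|\tilde m - m\|$, both $o_P(n^{-1/2})$ under Conditions~\ref{ass:DR2} and~\ref{ass:DR3}. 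The outcome is
\[\beta(\tilde\lambda) = \PP K_{\hat h,\tilde m} + o_P(n^{-1/2}).\]

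By construction, Step~2 of the algorithm enforces $\Pn K_{\hat h,\tilde m} = 0$, so $\PP K_{\hat h,\tilde m} = -(\Pn - \PP)K_{\hat h,\tilde m}$. Under the Donsker condition together with $L^2$ consistency $\hat h \to b$ and $\tilde m \to m$ (justified below), the usual equicontinuity argument yields $(\Pn - \PP)K_{\hat h,\tilde m} = (\Pn - \PP)K_{b,m} + o_P(n^{-1/2})$. To identify the limit $b$: under~\ref{ass:DR2} combined with strong positivity, both nuisance estimators are consistent, so $r_1 \equiv 0$ and hence $\hat e \to E[g(W)] = \PP(A=1)$; consequently,
\[\hat q(w) \to E\left\{\frac{A}{\PP(A=1)} - \frac{A}{g(W)}\,\bigg|\, m(W) = m(w)\right\} = \frac{g_m(w)}{\PP(A=1)} - 1,\]
so that $\hat h = \hat q/\hat g \to \{g_m(w) - \PP(A=1)\}/\{\PP(A=1)g(w)\} = b(w)$. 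Chaining the three steps yields the claimed expansion.

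The most delicate ingredient is the first step, specifically the substitution of $\hat h$ for the population $q/g$ inside the integrand of the drift representation. This is precisely where Condition~\ref{ass:DR3} does work that~\ref{ass:DR2} alone cannot, since $\hat q$ is a new nuisance estimator (obtained by univariate kernel smoothing onto $\hat m(W)$) whose rate is logically independent of those of $\hat m$ and $\hat g$; without~\ref{ass:DR3} this residual would not collapse into $o_P(n^{-1/2})$. The remaining two steps are routine applications of the score equation solved by the algorithm and Donsker-class empirical process theory.
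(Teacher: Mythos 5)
Your proof is correct and follows essentially the same route as the paper's: the drift-to-score representation from Lemma~\ref{lemma:driftrep}, the Cauchy--Schwarz bound $O_P\bigl(\|m-\hat m\|\{\|q-\hat q\|+\|g-\hat g\|\}\bigr)=o_P(n^{-1/2})$ to swap $q/g$ for $\hat h$, the score equation $\Pn K_{\hat h,\tilde m}=0$, Donsker equicontinuity, and the identification $\hat q\to g_m/\PP(A=1)-1$ via the law of total expectation. If anything you are more careful than the paper in tracking $\tilde m$ versus $\hat m$ through the tilting step; the only loose point, shared with the paper, is that consistency of both nuisance estimators is asserted rather than following literally from the product-rate condition \ref{ass:DR2}.
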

\begin{proof}From Lemma~\ref{lemma:driftrep}, we
  have
  \[\beta(\tilde\lambda)=\PP S_h(O) + o_P(n^{-1/2})=\PP K_{h, \hat m}(O) + o_P(n^{-1/2})\]
  By construction we have $\Pn K_{\hat h, \hat m}=0$, thus
  \begin{equation}
    \beta(\tilde\lambda)=-(\Pn- \PP)K_{\hat h, \hat m}-
    \PP(K_{h, m}-K_{h, \hat m}).\label{eq:betalin}
  \end{equation} We have
  \begin{align*}
    \PP(K_{h, m}-K_{h, \hat m}) &= \int a(h - \hat h)(y - \hat
                                  m)\dd \PP\\
                                &= \int a\left(\frac{q}{g} -
                                  \frac{q}{\hat g} +
                                  \frac{q}{\hat g} - \frac{\hat
                                  q}{\hat g}\right)(y - \hat
                                  m)\dd \PP\\
                                &= -\int \frac{q}{\hat g}(g -
                                  \hat g)(m - \hat m)\dd \PP + \int \frac{g}{\hat g}(q -
                                  \hat q)(m - \hat m)\dd \PP
  \end{align*}
  By assumption, we have
  \begin{equation} \PP(K_{h, m}-K_{h, \hat m})=O_P\big(||m - \hat m||\{||q
    - \hat q|| + ||g - \hat g||\}\big)=o_P(n^{-1/2}).\label{eq:rootn}
  \end{equation} Under Condition \ref{ass:DR1}, we have $\hat e(W)\to \PP(A=1)$, and
  therefore
  \[\hat q(w)\to E_{\PP}\left\{A\left(\frac{1}{\PP(A=1)} -
        \frac{1}{g(W)}\right)\,\bigg|\,\, m(W) = m(w)\right\}\]
  The law of total expectation shows that the right hand side of the
  above expression is equal to $g_m(w)/\PP(A=1) - 1.$ This shows
  that $\hat h\to b$. Application of Theorem 19.24 of
  \cite{vanderVaart98} to (\ref{eq:betalin}) shows that:
  \[    \beta(\tilde\lambda)= -(\Pn - \PP)K_{b, m}(O) +
    o_P(n^{-1/2}), \]
  completing the proof of the lemma.
\end{proof}

\subsection{Simulation Results}
\begin{figure}[H]
  \centering
  \includegraphics[width=\linewidth]{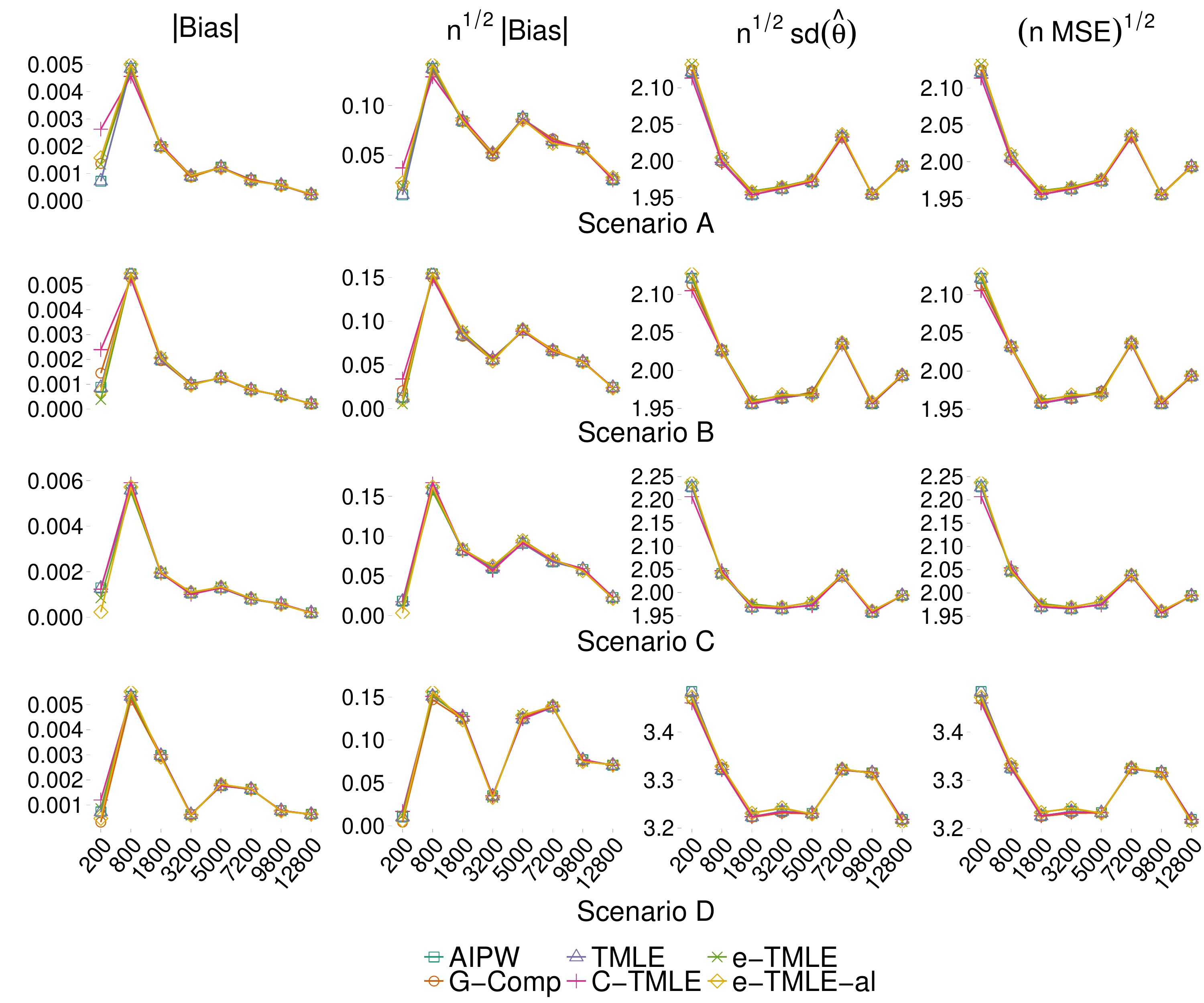}
  \caption{Simulation Results: bias, variance, and mean squared error
    for $\delta = 0$.}
  \label{fig:simula1}
\end{figure}

\begin{figure}[H]
  \centering
  \includegraphics[width=0.8\linewidth]{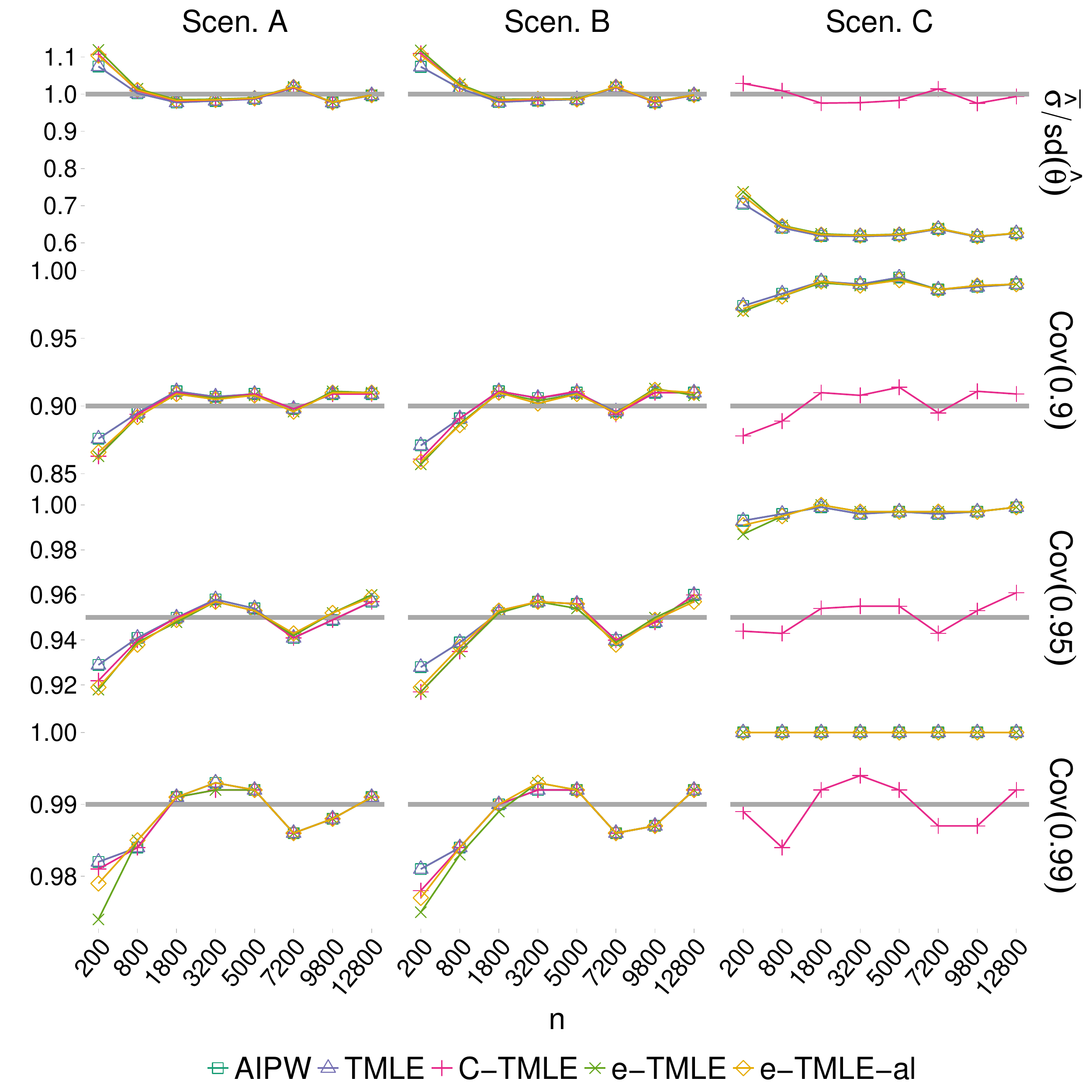}
  \caption{Simulation Results: variance estimation and coverage
    probabilities for $\delta = 0$.}
  \label{fig:simula2}
\end{figure}

\bibliographystyle{plainnat}
\bibliography{tmle}

\begin{thebibliography}{51}
\providecommand{\natexlab}[1]{#1}
\providecommand{\url}[1]{\texttt{#1}}
\expandafter\ifx\csname urlstyle\endcsname\relax
  \providecommand{\doi}[1]{doi: #1}\else
  \providecommand{\doi}{doi: \begingroup \urlstyle{rm}\Url}\fi

\bibitem[Beal and Kupzyk(2014)]{beal2014introduction}
Sarah~J Beal and Kevin~A Kupzyk.
\newblock An introduction to propensity scores: what, when, and how.
\newblock \emph{The Journal of Early Adolescence}, 34\penalty0 (1):\penalty0
  66--92, 2014.

\bibitem[Belloni et~al.(2014)Belloni, Chernozhukov, and
  Hansen]{belloni2014inference}
Alexandre Belloni, Victor Chernozhukov, and Christian Hansen.
\newblock Inference on treatment effects after selection among high-dimensional
  controls.
\newblock \emph{The Review of Economic Studies}, 81\penalty0 (2):\penalty0
  608--650, 2014.

\bibitem[Bembom et~al.(2008)Bembom, Fessel, Shafer, and van~der
  Laan]{Bembometal08a}
O.~Bembom, J.W. Fessel, R.W. Shafer, and M.J. van~der Laan.
\newblock Data-adaptive selection of the adjustment set in variable importance
  estimation.
\newblock 2008.
\newblock URL \url{http://www.bepress.com/ucbbiostat/paper231}.

\bibitem[Benkeser and van~der Laan(2016)]{benkeser2016highly}
David Benkeser and Mark van~der Laan.
\newblock The highly adaptive lasso estimator.
\newblock In \emph{2016 IEEE International Conference on Data Science and
  Advanced Analytics (DSAA)}, pages 689--696. IEEE, 2016.

\bibitem[Benkeser et~al.(2016)Benkeser, Carone, van~der Laan, and
  Gilbert]{benkeser2016doubly}
David Benkeser, Marco Carone, Mark~J van~der Laan, and Peter Gilbert.
\newblock Doubly-robust nonparametric inference on the average treatment
  effect.
\newblock Technical Report 356, U.C. Berkeley Division of Biostatistics Working
  Paper Series, 2016.

\bibitem[Benkeser et~al.(2019)Benkeser, Cai, and van~der
  Laan]{benkeser2019nonparametric}
David Benkeser, Weixin Cai, and Mark~J van~der Laan.
\newblock A nonparametric super-efficient estimator of the average treatment
  effect.
\newblock \emph{arXiv preprint arXiv:1901.05056}, 2019.

\bibitem[Bickel et~al.(2009)Bickel, Ritov, Tsybakov,
  et~al.]{bickel2009simultaneous}
Peter~J Bickel, Ya’acov Ritov, Alexandre~B Tsybakov, et~al.
\newblock Simultaneous analysis of lasso and dantzig selector.
\newblock \emph{The Annals of Statistics}, 37\penalty0 (4):\penalty0
  1705--1732, 2009.

\bibitem[Bickel et~al.(1997)Bickel, Klaassen, Ritov, and Wellner]{Bickel97}
P.J. Bickel, C.A.J. Klaassen, Y.~Ritov, and J.~Wellner.
\newblock \emph{Efficient and Adaptive Estimation for Semiparametric Models}.
\newblock Springer-Verlag, 1997.

\bibitem[Brookhart et~al.(2006)Brookhart, Schneeweiss, Rothman, Glynn, Avorn,
  and St{\"u}rmer]{brookhart2006variable}
M~Alan Brookhart, Sebastian Schneeweiss, Kenneth~J Rothman, Robert~J Glynn,
  Jerry Avorn, and Til St{\"u}rmer.
\newblock Variable selection for propensity score models.
\newblock \emph{American journal of epidemiology}, 163\penalty0 (12):\penalty0
  1149--1156, 2006.

\bibitem[Chen and White(1999)]{chen1999improved}
Xiaohong Chen and Halbert White.
\newblock Improved rates and asymptotic normality for nonparametric neural
  network estimators.
\newblock \emph{IEEE Transactions on Information Theory}, 45\penalty0
  (2):\penalty0 682--691, 1999.

\bibitem[Cheng et~al.(2017)Cheng, Chakrabortty, Ananthakrishnan, and
  Cai]{cheng2017estimating}
David Cheng, Abhishek Chakrabortty, Ashwin~N Ananthakrishnan, and Tianxi Cai.
\newblock Estimating average treatment effects with a double-index propensity
  score.
\newblock \emph{arXiv preprint arXiv:1702.01349}, 2017.

\bibitem[D{\'\i}az and van~der Laan(2017)]{diaz2017doubly}
Iv{\'a}n D{\'\i}az and Mark~J van~der Laan.
\newblock Doubly robust inference for targeted minimum loss--based estimation
  in randomized trials with missing outcome data.
\newblock \emph{Statistics in medicine}, 36\penalty0 (24):\penalty0 3807--3819,
  2017.

\bibitem[Dorn et~al.(2008)Dorn, Susman, Pabst, Huang, Kalkwarf, and
  Grimes]{dorn2008association}
Lorah~D Dorn, Elizabeth~J Susman, Stephanie Pabst, Bin Huang, Heidi Kalkwarf,
  and Susannah Grimes.
\newblock Association of depressive symptoms and anxiety with bone mass and
  density in ever-smoking and never-smoking adolescent girls.
\newblock \emph{Archives of pediatrics \& adolescent medicine}, 162\penalty0
  (12):\penalty0 1181--1188, 2008.

\bibitem[Ertefaie et~al.(2018)Ertefaie, Asgharian, and
  Stephens]{ertefaie2018variable}
Ashkan Ertefaie, Masoud Asgharian, and David~A Stephens.
\newblock Variable selection in causal inference using a simultaneous
  penalization method.
\newblock \emph{Journal of Causal Inference}, 6\penalty0 (1), 2018.

\bibitem[Greenland(2008)]{greenland2008invited}
Sander Greenland.
\newblock Invited commentary: variable selection versus shrinkage in the
  control of multiple confounders.
\newblock \emph{American Journal of Epidemiology}, 167\penalty0 (5):\penalty0
  523--529, 2008.

\bibitem[Gruber and van~der Laan(2010{\natexlab{a}})]{Gruber2010t}
Susan Gruber and Mark~J van~der Laan.
\newblock A targeted maximum likelihood estimator of a causal effect on a
  bounded continuous outcome.
\newblock \emph{The International Journal of Biostatistics}, 6\penalty0 (1),
  2010{\natexlab{a}}.

\bibitem[Gruber and van~der Laan(2010{\natexlab{b}})]{susan2010application}
Susan Gruber and Mark~J. van~der Laan.
\newblock An application of collaborative targeted maximum likelihood
  estimation in causal inference and genomics.
\newblock \emph{The International Journal of Biostatistics}, 6\penalty0
  (1):\penalty0 1--31, 2010{\natexlab{b}}.

\bibitem[Hahn(1998)]{hahn1998role}
Jinyong Hahn.
\newblock On the role of the propensity score in efficient semiparametric
  estimation of average treatment effects.
\newblock \emph{Econometrica}, pages 315--331, 1998.

\bibitem[Hahn(2004)]{hahn2004functional}
Jinyong Hahn.
\newblock Functional restriction and efficiency in causal inference.
\newblock \emph{Review of Economics and Statistics}, 86\penalty0 (1):\penalty0
  73--76, 2004.

\bibitem[Ju et~al.(2017{\natexlab{a}})Ju, Gruber, Lendle, Chambaz, Franklin,
  Wyss, Schneeweiss, and van~der Laan]{ju2017scalable}
Cheng Ju, Susan Gruber, Samuel~D Lendle, Antoine Chambaz, Jessica~M Franklin,
  Richard Wyss, Sebastian Schneeweiss, and Mark~J van~der Laan.
\newblock Scalable collaborative targeted learning for high-dimensional data.
\newblock \emph{Statistical methods in medical research}, page
  0962280217729845, 2017{\natexlab{a}}.

\bibitem[Ju et~al.(2017{\natexlab{b}})Ju, Wyss, Franklin, Schneeweiss,
  H{\"a}ggstr{\"o}m, and van~der Laan]{ju2017collaborative}
Cheng Ju, Richard Wyss, Jessica~M Franklin, Sebastian Schneeweiss, Jenny
  H{\"a}ggstr{\"o}m, and Mark~J van~der Laan.
\newblock Collaborative-controlled lasso for constructing propensity
  score-based estimators in high-dimensional data.
\newblock \emph{Statistical methods in medical research}, page
  0962280217744588, 2017{\natexlab{b}}.

\bibitem[Ju et~al.(2018)Ju, Chambaz, and van~der Laan]{ju2018collaborative}
Cheng Ju, Antoine Chambaz, and Mark~J van~der Laan.
\newblock Collaborative targeted minimum loss inference from continuously
  indexed nuisance parameter estimators.
\newblock \emph{arXiv preprint arXiv:1804.00102}, 2018.

\bibitem[Kang and Schafer(2007)]{Kang2007}
J.~Kang and J.~Schafer.
\newblock Demystifying double robustness: A comparison of alternative
  strategies for estimating a population mean from incomplete data (with
  discussion).
\newblock \emph{Statistical Science}, 22:\penalty0 523--39, 2007.

\bibitem[Koch et~al.(2018)Koch, Vock, and Wolfson]{koch2018covariate}
Brandon Koch, David~M Vock, and Julian Wolfson.
\newblock Covariate selection with group lasso and doubly robust estimation of
  causal effects.
\newblock \emph{Biometrics}, 74\penalty0 (1):\penalty0 8--17, 2018.

\bibitem[Kuhn et~al.(2017)Kuhn, Wing, Weston, Williams, Keefer, Engelhardt,
  Cooper, Mayer, Kenkel, the R~Core~Team, Benesty, Lescarbeau, Ziem, Scrucca,
  Tang, Candan, and Hunt.]{Caret}
Max Kuhn, Jed Wing, Steve Weston, Andre Williams, Chris Keefer, Allan
  Engelhardt, Tony Cooper, Zachary Mayer, Brenton Kenkel, the R~Core~Team,
  Michael Benesty, Reynald Lescarbeau, Andrew Ziem, Luca Scrucca, Yuan Tang,
  Can Candan, and Tyler Hunt.
\newblock \emph{caret: Classification and Regression Training}, 2017.
\newblock URL \url{https://CRAN.R-project.org/package=caret}.
\newblock R package version 6.0-78.

\bibitem[Kupzyk and Beal(2017)]{kupzyk2017advanced}
Kevin~A Kupzyk and Sarah~J Beal.
\newblock Advanced issues in propensity scores: Longitudinal and missing data.
\newblock \emph{The Journal of Early Adolescence}, 37\penalty0 (1):\penalty0
  59--84, 2017.

\bibitem[Lee et~al.(2010)Lee, Lessler, and Stuart]{lee2010improving}
Brian~K Lee, Justin Lessler, and Elizabeth~A Stuart.
\newblock Improving propensity score weighting using machine learning.
\newblock \emph{Statistics in medicine}, 29\penalty0 (3):\penalty0 337--346,
  2010.

\bibitem[Myers et~al.(2011)Myers, Rassen, Gagne, Huybrechts, Schneeweiss,
  Rothman, Joffe, and Glynn]{myers2011effects}
Jessica~A Myers, Jeremy~A Rassen, Joshua~J Gagne, Krista~F Huybrechts,
  Sebastian Schneeweiss, Kenneth~J Rothman, Marshall~M Joffe, and Robert~J
  Glynn.
\newblock Effects of adjusting for instrumental variables on bias and precision
  of effect estimates.
\newblock \emph{American journal of epidemiology}, 174\penalty0 (11):\penalty0
  1213--1222, 2011.

\bibitem[Neugebauer et~al.(2016)Neugebauer, Schmittdiel, and van~der
  Laan]{neugebauer2016case}
Romain Neugebauer, Julie~A Schmittdiel, and Mark~J van~der Laan.
\newblock A case study of the impact of data-adaptive versus model-based
  estimation of the propensity scores on causal inferences from three inverse
  probability weighting estimators.
\newblock \emph{The international journal of biostatistics}, 12\penalty0
  (1):\penalty0 131--155, 2016.

\bibitem[Pearl(2000)]{Pearl00}
J.~Pearl.
\newblock \emph{Causality: Models, Reasoning, and Inference}.
\newblock Cambridge University Press, Cambridge, 2000.

\bibitem[Polley et~al.(2017)Polley, LeDell, Kennedy, and {van der Laan}]{SL}
Eric Polley, Erin LeDell, Chris Kennedy, and Mark {van der Laan}.
\newblock \emph{SuperLearner: Super Learner Prediction}, 2017.
\newblock URL \url{https://github.com/ecpolley/SuperLearner}.
\newblock R package version 2.0-23-9000.

\bibitem[Porter et~al.(2011)Porter, Gruber, van~der Laan, and
  Sekhon]{Porter2011}
Kristin~E. Porter, Susan Gruber, Mark~J. van~der Laan, and Jasjeet~S. Sekhon.
\newblock The relative performance of targeted maximum likelihood estimators.
\newblock \emph{The International Journal of Biostatistics}, 7\penalty0
  (1):\penalty0 1--34, 2011.

\bibitem[Ridgeway and McCaffrey(2007)]{ridgeway2007}
Greg Ridgeway and Daniel~F. McCaffrey.
\newblock Comment: Demystifying double robustness: A comparison of alternative
  strategies for estimating a population mean from incomplete data.
\newblock \emph{Statist. Sci.}, 22\penalty0 (4):\penalty0 540--543, 11 2007.
\newblock \doi{10.1214/07-STS227C}.
\newblock URL \url{http://dx.doi.org/10.1214/07-STS227C}.

\bibitem[Robins et~al.(2007)Robins, Sued, Lei-Gomez, and Rotnitzky]{Robins2007}
James Robins, Mariela Sued, Quanhong Lei-Gomez, and Andrea Rotnitzky.
\newblock Comment: Performance of double-robust estimators when" inverse
  probability" weights are highly variable.
\newblock \emph{Statistical Science}, 22\penalty0 (4):\penalty0 544--559, 2007.

\bibitem[Rosenbaum and Rubin(1983)]{RosenbaumRubin83}
P.R. Rosenbaum and D.B. Rubin.
\newblock The central role of the propensity score in observational studies for
  causal effects.
\newblock \emph{Biometrika}, 70:\penalty0 41--55, 1983.

\bibitem[Rotnitzky et~al.(2010)Rotnitzky, Li, and Li]{rotnitzky2010note}
Andrea Rotnitzky, Lingling Li, and Xiaochun Li.
\newblock A note on overadjustment in inverse probability weighted estimation.
\newblock \emph{Biometrika}, 97\penalty0 (4):\penalty0 997--1001, 2010.

\bibitem[Schisterman et~al.(2009)Schisterman, Cole, and
  Platt]{schisterman2009overadjustment}
Enrique~F Schisterman, Stephen~R Cole, and Robert~W Platt.
\newblock Overadjustment bias and unnecessary adjustment in epidemiologic
  studies.
\newblock \emph{Epidemiology (Cambridge, Mass.)}, 20\penalty0 (4):\penalty0
  488, 2009.

\bibitem[Shortreed and Ertefaie(2017)]{shortreed2017outcome}
Susan~M Shortreed and Ashkan Ertefaie.
\newblock Outcome-adaptive lasso: Variable selection for causal inference.
\newblock \emph{Biometrics}, 73\penalty0 (4):\penalty0 1111--1122, 2017.

\bibitem[Starmans(2018)]{starmans2018predicament}
Richard~JCM Starmans.
\newblock The predicament of truth: On statistics, causality, physics, and the
  philosophy of science.
\newblock In \emph{Targeted Learning in Data Science}, pages 561--584.
  Springer, 2018.

\bibitem[Tan(2010)]{tan2010bounded}
Zhiqiang Tan.
\newblock Bounded, efficient and doubly robust estimation with inverse
  weighting.
\newblock \emph{Biometrika}, 97\penalty0 (3):\penalty0 661--682, 2010.

\bibitem[van~der Laan(2014)]{van2014targeted}
Mark~J van~der Laan.
\newblock Targeted estimation of nuisance parameters to obtain valid
  statistical inference.
\newblock \emph{The international journal of biostatistics}, 10\penalty0
  (1):\penalty0 29--57, 2014.

\bibitem[van~der Laan et~al.(2010)van~der Laan, Gruber,
  et~al.]{van2010collaborative}
Mark~J van~der Laan, Susan Gruber, et~al.
\newblock Collaborative double robust targeted maximum likelihood estimation.
\newblock \emph{The international journal of biostatistics}, 6\penalty0
  (1):\penalty0 1--71, 2010.

\bibitem[van~der Laan and Rose(2011)]{vanderLaanRose11}
M.J. van~der Laan and S.~Rose.
\newblock \emph{Targeted Learning: Causal Inference for Observational and
  Experimental Data}.
\newblock Springer, New York, 2011.

\bibitem[van~der Laan and Rose(2018)]{vanderLaanRose18}
M.J. van~der Laan and S.~Rose.
\newblock \emph{Targeted Learning in Data Science: Causal Inference for Complex
  Longitudinal Studies}.
\newblock Springer, New York, 2018.

\bibitem[van~der Laan and Rubin(2006)]{vanderLaan&Rubin06}
M.J. van~der Laan and D.~Rubin.
\newblock Targeted maximum likelihood learning.
\newblock \emph{The International Journal of Biostatistics}, 2\penalty0
  (1):\penalty0 Article 11, 2006.

\bibitem[van~der Laan et~al.(2005)van~der Laan, Petersen, and
  Joffe]{vanderLaan&Petersen&Joffe05}
M.J. van~der Laan, M.L. Petersen, and M.M. Joffe.
\newblock History-adjusted marginal structural models \& statically-optimal
  dynamic treatment regimens.
\newblock \emph{The International Journal of Biostatistics}, 1\penalty0
  (1):\penalty0 10--20, 2005.

\bibitem[van~der Laan(2006)]{Wang&Bembom&vanderLaan06}
Y.~Wang \& O. Bembom \&~M.J. van~der Laan.
\newblock Data adaptive estimation of the treatment specific mean.
\newblock \emph{Journal of Statistical Planning \& Inference}, 2006.

\bibitem[van~der Vaart(1998)]{vanderVaart98}
A.~W. van~der Vaart.
\newblock \emph{Asymptotic Statistics}.
\newblock Cambridge University Press, 1998.

\bibitem[van~der Vaart and Wellner(1996)]{vanderVaart&Wellner96}
A.~W. van~der Vaart and J.~A. Wellner.
\newblock \emph{Weak {C}onvergence and {E}mprical {P}rocesses}.
\newblock Springer-Verlag New York, 1996.

\bibitem[VanderWeele and Shpitser(2011)]{vanderweele2011new}
Tyler~J VanderWeele and Ilya Shpitser.
\newblock A new criterion for confounder selection.
\newblock \emph{Biometrics}, 67\penalty0 (4):\penalty0 1406--1413, 2011.

\bibitem[Wager and Walther(2015)]{wager2015adaptive}
Stefan Wager and Guenther Walther.
\newblock Adaptive concentration of regression trees, with application to
  random forests.
\newblock \emph{arXiv preprint arXiv:1503.06388}, 2015.

\end{thebibliography}

\end{document}